%
\documentclass[runningheads]{llncs}
\usepackage[T1]{fontenc}
%
\usepackage{graphicx}
%
%

\usepackage{cite}
\usepackage{amsmath,amssymb,amsfonts}
\usepackage{graphicx}
\usepackage{textcomp}
\usepackage{xcolor}
\usepackage{comment}

\usepackage{algorithm}
\usepackage{algpseudocode}
\usepackage{caption}
\usepackage{subcaption}
\captionsetup{compatibility=false}

\algrenewcommand\algorithmicindent{1.2em}

\usepackage{tikz}
\newcommand*\circled[1]{\tikz[baseline=(char.base)]{
            \node[shape=circle,draw,inner sep=0.2pt] (char) {#1};}}

\begin{document}
\title{Limited Preemption of the 3-Phase Task Model using Preemption Thresholds}
%
%
\author{Thilanka Thilakasiri\orcidID{0000-0001-9363-3525} \and
Matthias Becker\orcidID{0000-0002-1276-3609}}

\author{Thilanka Thilakasiri \and
Matthias Becker}
\authorrunning{Thilakasiri and Becker}
\titlerunning{Preemption Thresholds for the 3-Phase Task Model}
\institute{KTH Royal Institute of Technology \\
\email{thilanka@kth.se, mabecker@kth.se}\\}

%
\maketitle              
\begin{abstract}
Phased execution models are a well-known solution to tackle the unpredictability of today's complex COTS multi-core platforms. 
The semantics of these models dedicate phases for a task's execution and shared memory accesses. 
Memory phases are solely dedicated to load all necessary instructions and data to private local memory, and to write back the results of the computation. During execution phases, only the private local memory is accessed.
While non-preemptive execution phases utilize the local memory well, schedulability is reduced due to blocking. 
On the other hand, fully preemptive execution phases allow for better schedulability, but require local memory to be large enough to hold all tasks involved in preemption simultaneously. 
Limited preemption is a promising approach that provides moderation between non-preemptive and fully preemptive scheduling. 

In this paper, we propose using preemption thresholds to limit the number of preemptions to minimize local memory usage while maintaining schedulability.
We propose a worst-case response time and a worst-case memory requirement analysis for sporadic 3-phase tasks under partitioned fixed-priority scheduling with preemption thresholds. 
We further show how the state-of-the-art algorithm to assign preemption thresholds can be applied to the considered task model.
Evaluations demonstrate that preemption thresholds can significantly reduce the memory usage (by \textbf{$2.5\times$}) compared to fully preemptive scheduling, while maintaining high schedulability ratios ($13\times$) compared to non-preemptive scheduling. 

\keywords{3-phase model \and limited preemption \and preemption thresholds}
\end{abstract}
\section{Introduction}
Modern multi-core processors offer high computational power and low energy consumption compared to their single-core counterparts. 
However, the use of Commercial off-the-shelf (COTS) multi-core platforms in time-critical applications remains challenging due to the unpredictability that emerges from their design. 
Several resources are shared between the different compute cores, such as the interconnect, caches, memories, or I/O.
As a result, the timing behaviour of an application executing on one core can be interfered with by other unrelated applications that execute on different cores at the same time, which poses significant challenges to determining the application's worst-case behaviour ~\cite {Wilhelm:SIES2012}.
The phased execution models address this problem by dividing task execution into dedicated phases for computation and shared memory access \cite{pellizzoni2011predictable, durrieu2014predictable}.
The 3-phase model has dedicated phases for a \emph{read}, \emph{execute}, and \emph{write} operations. 
The read phase is used to load all data and code necessary for execution into core-local memory. 
During the execution phase, only the core-local memory is accessed, followed by a write phase that is used to write back the results of the computation to shared memory. 
Thus, shared memory access is avoided during computation.
Additionally, contention when accessing the main memory is avoided by not allowing two memory phases to be scheduled simultaneously.

Typically, non-preemptive execution is used for the 3-phase model to easily achieve predictability and to efficiently utilize the local memory.
However, this can lead to low schedulability as high-priority tasks can be blocked by low-priority tasks. 
While allowing preemption can improve schedulability, realizing preemption for the phased execution model becomes non-trivial given its semantics and the limited size of local memories, i.e., the data stored in the local memory is solely used during the computation phase. Thus, handling the intermediate results of a preempted task that are only stored in the local memory becomes crucial.
Different approaches to implementing preemption for the phased execution model are explored~\cite{AER_preempt}, considering these challenges.
In this paper, we focus on the \emph{keep-in-core} method that stores the data of both the preempting task and the preempted task in the local memory during preemption, avoiding additional memory phases to write the local data of the preempted task to the shared memory and read them back when its execution is resumed. 
The keep-in-core approach does not allow task migration after preemption.
In this work, we focus on partitioned fixed-priority scheduling, where each task is statically assigned to a core. 
Thus, the keep-in-core approach is a natural fit.
With this preemption approach, both the preempting and preempted tasks' data must fit in the local memory.
Thus, to decide the schedulability of a task set, it is important to consider if each task is completed before its deadline, as well as if the local memory is sufficient to hold data of all tasks that can be involved in a preemption chain at any given time. 
Given that the typical local memory size in COTS platforms is limited, i.e., around 64 kB  \cite{NXPS32Z27, StellarE1}, this capacity can be easily exceeded with fully preemptive execution phases.
Limiting preemption to only where beneficial can reduce the local memory usage while maintaining schedulability.




The study in~\cite{AER_preempt} explores different preemption strategies for 3-phase tasks under table-driven scheduling and shows that limited preemptive approaches perform best.
Yao et al. \cite{yao2012memory} and Melani et al. \cite{melani2017} also have highlighted possible benefits of limited preemption methods to address the drawbacks due to the limitation of the local memory. While a detailed investigation has been mentioned as future work, to the best of our knowledge, it has not been done yet. This is the first paper to utilizes limited preemption to improve memory feasibility of 3-phase tasks.
Preemption thresholds (PT) can be used to reduce preemptions that are unnecessary to maintain schedulability, which is shown to reduce the stack space~\cite{Yao:EMSOFT2010, saksena2000scalable}.
Preemption thresholds can also be easily supported by operating systems~\cite{Becker:SIES2015}.
Thus, this paper addresses the online scheduling of 3-phase tasks under partitioned fixed-priority scheduling with preemption thresholds, aiming to reduce memory usage while maintaining schedulability. 

In detail, the paper has the following contributions:
\begin{itemize}
    \item A schedulability test for 3-phase tasks under partitioned fixed-priority scheduling with preemption thresholds.
    \item An analysis to bound the maximum required local memory considering preemption thresholds.
    \item Our evaluations show the benefits of preemption thresholds for 3-phase tasks compared to both fully- and non-preemptive scheduling. For realistic memory sizes, $13\times$ more task sets are both schedulable and memory-feasible, while requiring $2.5\times$ less local memory than fully-preemptive scheduling.
\end{itemize}

\section{Related Work}
\label{sec:RelatedWork}

Timing verification of multi-core real-time systems is challenging due to the unpredictability of the underlying hardware platform. 
Consequently, this area has received significant attention in recent years~\cite{maiza2019survey, lugo2022survey}.
Among those, phased execution models are an established approach to containing the problem's complexity by resolving resource contention through scheduling in software. 
The PRedictable Execution Model (PREM) is initially proposed by Pellizzoni et al.~\cite{pellizzoni2008coscheduling} and later extended into the 3-phase~\cite{alhammad2014time} or Acquisition, Execution, Restitution (AER) model~\cite{durrieu2014predictable}.
For predictability, the phased execution model is typically explored under non-preemptive scheduling~\cite{maia2016closer, 7557865, exact_AER}.
A number of works also explore the phased execution model under different settings of preemptive scheduling.
The main focus of these works is on scheduling. Thus, the aspect of handling local memory is not considered.
Either the details on the management of local memory are not provided~\cite{huang2016mirror, senoussaoui2022contention} or it is assumed that the core's local memory is sufficiently large to store data of all the tasks assigned to it~\cite{yao2015global, yao2012memory, arora2022analyzing, sanudo2016schedulability, melani2017}.
There exist frameworks to divide the tasks into multiple segments if tasks do not fit in the local memory \cite{segmentTasks_soliman}. However, these task-splitting techniques must be done at design time, and tasks can only be split at specific places, considering the properties of their programs. 
Tabish et al. \cite{tabish2023x} use the segment streaming model proposed by Soliman et al. \cite{9052209}. Preemption of streaming segments is allowed in a partitioned system. In a case of preemption, the data of the preempted segment is written to the shared memory and read again when resuming its execution.
Thilakasiri et al.~\cite{AER_preempt} showed that due to the semantics of the phased execution models and the typically limited size of local memories in COTS platforms, handling local data during preemption is crucial.
They explored different alternative methods to realize preemption in the phased execution.
Two methods are discussed. 
The waiting time minimizing preemption method (WMPM) or \emph{write-back} method schedules an additional write and read phase to store the intermediate data of the preempted task in global memory and to read the data back when its execution is resumed. The overhead minimizing preemption method (OMPM) or \emph{keep-in-core} method keeps the local data of the preempted task in local memory during preemption. 
Their results, based on table-driven scheduling, demonstrate that a limited preemptive approach performs best for either preemption method.
Similar indications are made by others~\cite{yao2012memory, melani2017, Wip_RTSS}, while a detailed investigation is left for future work.

Many works have proposed different limited preemption methods for conventional tasks \cite{Buttazzo2024, lee2012controlling} to achieve a middle path to reduce large memory requirements and preemption overheads in fully preemptive scheduling and to reduce blocking in non-preemptive scheduling.
Preemption points, deferred preemption, and preemption thresholds are well-known limited preemption techniques~\cite{Buttazzo2024}.
For the 3-phase model, Arora et al. \cite{arora2022analyzing} proposed a response time analysis for fixed priority scheduling where preemption points are at the boundaries of the task phases. 
Application-specific preemption points can be selected by analysing the programs of tasks to improve the benefit of preemption points.
Though the objective is different, this has been explored by the works that exist in the literature for splitting 3-phase tasks \cite{segmentTasks_soliman}. 
Preemption thresholds limit the number of preemptions by limiting possible preempting tasks. 
The priority of a task is temporarily increased to a threshold, and preemption is only allowed from tasks that have a priority higher than this threshold~\cite{wang1999scheduling, davis2000embedded, saksena2000scalable, regehr2002scheduling, Yao:EMSOFT2010}. 
Different approaches address the configuration of preemption thresholds to reduce the stack requirements of applications~\cite{wang_partitioned_multiCore, saksena2000scalable}.
Though many works have addressed limited preemption for conventional tasks, the area has not been explored for phased execution models.

In this work, we address partitioned fixed-priority scheduling with preemption thresholds for the 3-phase task model under the keep-in-core preemption method, as it is best suited among the different preemption models to control memory requirements by selecting preemption thresholds.
We show how both worst-case response times and maximum local memory requirements can be computed under preemption thresholds and how threshold values can be configured to reach a schedulable system configuration.

\section{System Model}

We consider a multi-core system with $m$ cores, where $\Pi$ denotes the set of all cores and $\pi_i$ denotes the $i^{th}$ core in $\Pi$. 
Each core has its private, local memory, i.e., scratch-pad memory (SPM) or cache of size $S$. This local memory stores the data and instructions needed to execute the tasks assigned to its core. All cores can access the main memory via a shared interconnect, e.g., a bus.

\subsection{Application Model}

The considered application consists of a set of $n$ sporadic tasks $\tau = \{ \tau_1, \tau_2, ..., \tau_n\}$ that can be scheduled independently\footnote{This does not necessarily mean that the tasks are logically independent. Tasks can communicate via shared variables/memory following last-is-best semantics, which is, for example, typical in automotive systems \cite{kramer2015real, 7557865}.}.
Each task is divided into three phases, i.e., the read (R) phase, the execution (E) phase, and the write (W) phase, and are respectively dedicated to read, execute, and write operations of a task.
In the R-phase, all input data and instructions of the task are copied from the main memory to the local memory of the core. 
Thus, the following E-phase is executed using only local data copies of the task. 
Once the E-phase is completed, the W-phase writes the output data to the main memory. 
Each task $\tau_i$ is represented using the tuple $(T_i, D_i, M_i, P_i, \theta_i, C_i^r, C_i^e, C_i^w)$. 
$T_i$ and $D_i$ represent the task's minimum inter-arrival time between two consecutive releases and the relative deadline, respectively, where $D_i \leq T_i$. 
$M_i$ is the memory footprint of the task. It is considered to be the sum of the code size ($\mathit{Im_i}$), total data size ($\mathit{Dm_i}$), and the maximum stack usage ($\mathit{Ms_i}$) of the task, i.e., $\mathit{M_i = Im_i + Dm_i + Ms_i}$.
$P_i$ represents the task's fixed nominal priority and $\theta_i$ represents tasks's preemption threshold, where $P_i \leq \theta_i$. Higher values indicate higher priority.
$C_i^r$, $C_i^e$, and $C_i^w$ respectively represent the worst-case execution times for the read, execute, and write phases. 
The total worst-case execution time $C_i$ is their sum, i.e., $C_i = C_i^{r} + C_i^{e} + C_i^{w}$.
The utilization of a task can be represented as $u_i = C_i/ T_i$, where the total utilization of the task set is $\sum_{i=1}^{n} u_i$.
Each release of a task is called a \emph{job} or an \emph{instance}. The $k^{th}$ job of task $\tau_i$ is represented by $\tau_{i,k}$.

The core task $\tau_i$ is assigned to is called the local core $\pi_l$, and other cores are referred to as remote cores, denoted by $\pi_r$ where $\mathit{\pi_l \neq \pi_r}$. 
Other tasks, $\tau_j$, i.e., $\mathit{\tau_j \neq \tau_i}$, that have a higher nominal priority, higher or equal nominal priority, and lower nominal priority are respectively represented by $\mathit{hp_{i}}$, $\mathit{hep_{i}}$, and $\mathit{lp_{i}}$. 



\subsection{Execution Model}

Partitioned fixed-priority scheduling is used. 
Thus, each task is statically assigned to a core, and migration is not allowed. 
A task's read and write phases are non-preemptive, while execution phases can be preempted.
The tasks are scheduled from the memory perspective.
Pending read and write phases of a core are ordered according to their nominal priorities to be scheduled. 
To avoid contention when accessing the shared memory, only one memory phase is allowed to be active at a time. 
Thus, the highest priority pending memory phase out of all cores (global) is scheduled on the bus, where it will then run to completion (non-preemptive). This model is in line with related work \cite{arora2022analyzing}.
As we consider preemption thresholds (PT), the priority of a task $\tau_i$ is raised to its preemption threshold $\theta_i$ with the start of the read phase.
The execute phase of a task starts directly upon its read phase's completion.
Only tasks that have a nominal priority higher than the executing task's preemption threshold $\theta_i$ are allowed to preempt $\tau_i$'s execution phase. 
After the write phase finishes, the priority is restored to its nominal value.
However, the nominal priority is considered to obtain the bus access for the write phase.
Hence, the preemption threshold \emph{only} affects the local tasks that are mapped to the same core.

\begin{figure}[b]
    \centering
    \includegraphics[width=0.65\linewidth]{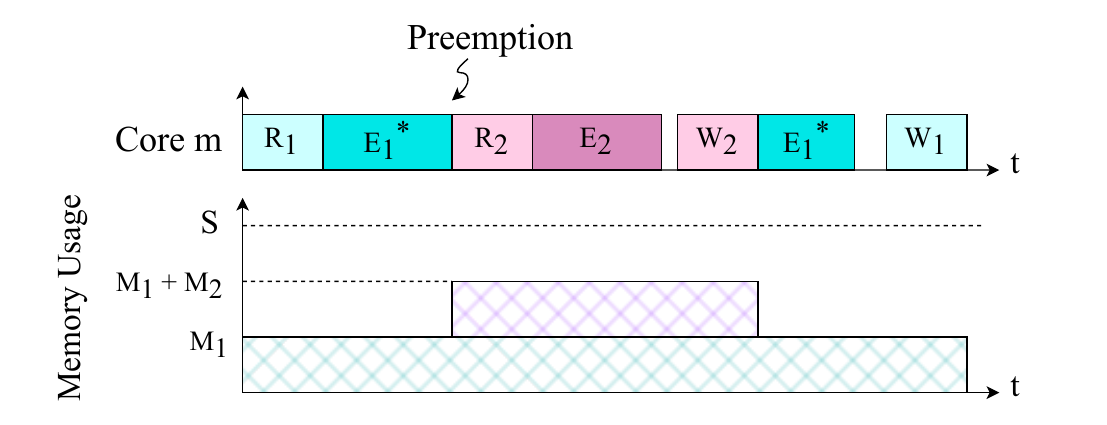}
    \caption{Execution and memory usage during preemption of $\tau_1$ (blue) by $\tau_2$ (purple) following the keep-in-core method.}
    \label{fig:OMPM}
\end{figure}

We consider the execution phases to be preemptive and preemption is realized using the \emph{keep-in-core} preemption method \cite{AER_preempt}, where the data of the preempted task is kept in the core's local memory during preemption until it resumes execution (see Figure \ref{fig:OMPM}). 
At the time of preemption, the execution of the preempted task is paused, and the read phase of the preempting task is started.
The data and code of the preempting task are stored in local memory above the data and code of the preempted task (like a stack).
Hence it is crucial that data of all preempted tasks and the running task fit into local memory simultaneously.
Once all 3 phases of the preempting task are completed, the preempted task's execution phase can resume. 



We say a task set is \emph{schedulable} if the worst-case response time (WCRT) of each task is below its deadline and \emph{memory-feasible} if the worst-case memory usage of tasks assigned to all cores is below their local memory size.

\section{Response Time Analysis}
\label{sec:Analysis}


Our response time analysis for 3-phase tasks with preemption thresholds utilizes techniques of existing response time analyses for limited preemptive scheduling using preemption thresholds for conventional tasks \cite{Buttazzo2024, wang1999scheduling, regehr2002scheduling} and on existing response time analysis for fully preemptive E-phases of 3-phase tasks \cite{arora2022analyzing}.
We first give the necessary background of the concepts and definitions used when calculating the WCRT (Section \ref{sec:AroraAnalysis}), and later sections present the steps followed by the proposed response time analysis for partitioned 3-phase tasks using preemption thresholds in detail (Sections \ref{sec:alteredAnalysis}-\ref{sec:calculate_WCRT}).



\subsection{Background on Bounding the Response Times}
\label{sec:AroraAnalysis}


In fixed-priority limited preemptive scheduling, the WCRT of a task $\tau_i$ executed on the local core $\pi_l$ is observed during the longest level-i active period $L_{i,l}$ \cite{bril2007worst}. 
In non-preemptive and limited preemptive scheduling, a high-priority task released during a non-preemptive section of a low-priority task can be pushed ahead to successive jobs, causing the successive jobs to experience higher interference.
This anomaly is called the \emph{self-pushing phenomenon} \cite{Buttazzo2024}.
To consider possible effects of the \emph{self-pushing phenomenon}, all jobs in the level-i active period must be analyzed when calculating the WCRT \cite{Buttazzo2024, bril2007worst, regehr2002scheduling}. This is due to the fact that the worst-case response time does not necessarily occur during the first instance of a task.
The level-i active period is based on the pending load. 



 \begin{definition}[Level-i pending load $\mathit{W_i^p(t)}$ \cite{bril2007worst}] 
 \label{def:pending}
 Level-i pending load $\mathit{W_i^p(t)}$ at time $t$ is the amount of processing that still needs to be performed at time $t$ due to jobs with priority higher than or equal to $\mathit{P_i}$ released strictly before $t$.
 \end{definition}
 
 \begin{definition}[Level-i active period \cite{bril2007worst}]
 \label{def:activeP}
A level-i active period $L_i$ is an interval  $\mathit{[a, b)}$ such that the level-i \emph{pending} workload $\mathit{W^P_i(t)}$ is positive for all $\mathit{t \in (a, b)}$ and zero in $a$ and $b$.
 \end{definition}

In order to calculate the WCRT of a 3-phase task $\mathit{\tau_i}$ assigned to core $\pi_l$, the level-i active period $L_{i,l}$ must be calculated. 
 There exist four types of delays on a task, $\tau_i$, that must be accounted for due to other tasks executing on the local core, $\pi_l$ (the same core as the task under analysis is assigned to) and on remote cores, $\pi_r$ (other cores) as identified in \cite{arora2022analyzing}.
To bound $\mathit{L_{i,l}}$, these four delays must be bounded for a task $\tau_i$.

 \begin{enumerate}
     \item Intra-core interference ($I_i$) - maximum interference that can be caused on a task $\tau_i$ by all higher or equal priority tasks ($hep_{i,l}$) released on the local core $\pi_l$.
     \item Intra-core blocking ($B_i$) - maximum blocking that can be caused on a task $\tau_i$ due to lower priority tasks ($lp_{i,l}$) that execute on the local core $\pi_l$.
     \item Inter-core interference ($I_i^{mem}$) - The maximum memory interference that can be suffered by a task $\tau_i$ from all higher or equal priority tasks ($hep_{i,r}$) on all remote cores $\pi_r$.
     \item Inter-core blocking ($B_i^{mem}$) - maximum memory blocking that can be suffered by a task $\tau_i$ by all lower priority tasks ($lp_{i,r}$) that execute on all remote cores $\pi_r$.
 \end{enumerate}

Figure \ref{fig:active_period_example} shows an example active period where \circled{1}-\circled{7} represents interference and blocking from the factors above.
Section \ref{sec:alteredAnalysis} presents how to bound the above 4 factors for our model, followed by Section \ref{sec:calculate_WCRT}, where the main steps to calculate WCRT is presented, i.e., calculation of the active period, $\mathit{L_{i,l}}$ using the above 4 factors, calculating the response time for each job in the active period to find the worst-case response time among them.

\begin{figure}
    \centering
    \includegraphics[width=\linewidth]{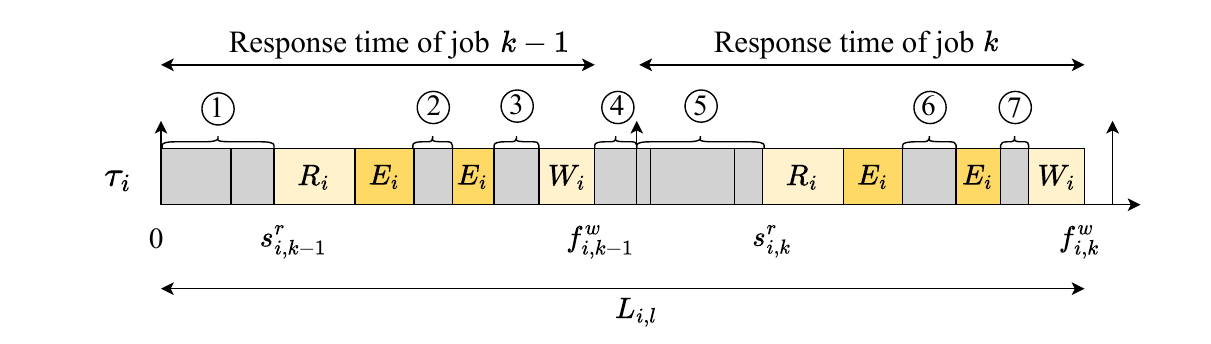}
    \caption{Illustration of a possible active period of $\mathit{\tau_i}$. The active period has 2 jobs of $\mathit{\tau_i}$. Gray colored durations represent possible $I_i$, $B_i$, $I^{mem}_i$, and $B^{mem}_i$ delays.}
    \label{fig:active_period_example}
\end{figure}

\subsection{Bounding Interference and Blocking}
\label{sec:alteredAnalysis}

This section presents how to bound $I_i$, $B_i$, $I^{mem}_i$, and $B^{mem}_i$ for 3-phase tasks under preemption threshold scheduling, which includes non-trivial changes from that of fully-preemptive execution phases presented in \cite{arora2022analyzing}, that we use as a starting point.
Having a nominal priority as well as a preemption threshold makes bounding $I_i$, $B_i$, $I^{mem}_i$, and $B^{mem}_i$ terms challenging.
To simplify the problem, we categorize other tasks in the task set into 6 categories depending on their nominal priorities and preemption thresholds with respect to those of $\tau_i$ (see Figure \ref{fig:categories}). 

\begin{itemize}
    \item [\circled{A}] - Tasks that have a nominal priority and preemption thresholds below the nominal priority of $\mathit{\tau_i}$, i.e., $\mathit{P_j \leq \theta_j < P_i \leq \theta_i}$. 
    
    \item [\circled{B}] - Tasks that have a nominal priority lower than the nominal priority of $\mathit{\tau_i}$ and a preemption threshold higher than or equal to the nominal priority but lower than the preemption threshold of $\mathit{\tau_i}$, i.e., $\mathit{P_j< P_i \leq \theta_j < \theta_i}$.
    
    \item [\circled{C}] - Tasks that have a nominal priority and preemption threshold higher than or equal to the nominal priority of $\mathit{\tau_i}$ but are lower than or equal to the preemption threshold of $\mathit{\tau_i}$, i.e., $\mathit{P_i \leq P_j \leq \theta_j \leq \theta_i}$.
    
    \item [\circled{D}] - Tasks that have a nominal priority higher than or equal to the nominal priority but lower than or equal to the preemption threshold of $\mathit{\tau_i}$ and have a preemption threshold higher than $\mathit{\tau_i}$'s preemption threshold, i.e., $\mathit{P_i \leq P_j \leq \theta_i < \theta_j}$.
    
    \item [\circled{E}] - Tasks that have a nominal priority and a preemption threshold higher than $\mathit{\tau_i}$'s preemption threshold, i.e., $\mathit{P_i \leq \theta_i < P_j \leq \theta_j}$.
    
    \item [\circled{F}] - Tasks that have a nominal priority lower than $\mathit{\tau_i}$'s priority but have a higher or equal preemption threshold compared to $\mathit{\tau_i}$'s preemption threshold, i.e., $\mathit{P_j < P_i \leq \theta_i \leq \theta_j}$.
\end{itemize}

\begin{figure}[b]
    \centering
    \includegraphics[width=0.8\linewidth]{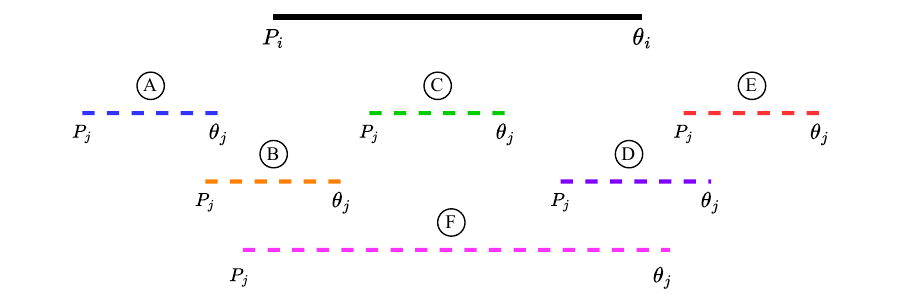}
    \caption{Categories of tasks that could interfere or block $\mathit{\tau_i}$}
    \label{fig:categories}
\end{figure}

The following rules were considered when carefully defining the above categories to account for equal priorities and preemption thresholds.
1) All tasks must fall under a category, 2) no task should fall under 2 categories as that task would be considered for interference and blocking twice otherwise, 3) Groups \circled{A}, \circled{B} and \circled{F} should not have tasks that have a nominal priority equal to $\tau_i$'s nominal priority as tasks that have an equal nominal priority are considered not to cause blocking as the lower priority tasks but to cause interference. Thus, they should fall under groups \circled{C}, \circled{D}, or \circled{E}, and 4) Group \circled{E} should not have tasks that have nominal priority equal to the preemption threshold of $\tau_i$ as such tasks can't preempt $\tau_i$ as they should in group \circled{E}.
Once the tasks are categorized, the next step is to bound $I_i$, $B_i$, $I_i^{mem}$, and $B_i^{mem}$ for 3-phase tasks under preemption threshold scheduling.

\subsubsection{Bounding Intra-Core Interference}
The intra-core interference a task $\mathit{\tau_i}$ can experience is given by Lemma \ref{lem:intraI}.
\begin{lemma}
\label{lem:intraI}
   The maximum interference a task $\mathit{\tau_i}$ can experience from tasks assigned to the local core during $\mathit{L_{i,l}}$, is upper-bounded by $\mathit{I_i(L_{i,l})}$, where
  \begin{equation}
  \label{Eq:IntraI}
     I_i(L_{i,l}) = \sum_{\tau_h \in hep_{i,l}} \eta^+_h (L_{i,l}) \times C_h
  \end{equation}
\end{lemma}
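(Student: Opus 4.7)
The plan is to establish that during the level-i active period $L_{i,l}$, every contribution to $\tau_i$'s delay from tasks assigned to the same core can be accounted for either as blocking (handled separately in $B_i$) or as interference, and that the interference component is completely captured by jobs of strictly-or-equally higher nominal priority tasks. The right-hand side of Equation \eqref{Eq:IntraI} is then just a job-count times WCET upper bound, so the work is in justifying this decomposition under the preemption-threshold semantics.

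First I would invoke the standard definition of $\eta^+_h(L_{i,l})$ as the maximum number of job releases of $\tau_h$ that can occur within any window of length $L_{i,l}$; a well-known upper bound is $\lceil L_{i,l}/T_h \rceil$ (or an arrival-curve based expression). By Definition \ref{def:activeP}, the level-i active period contains only workload of priority $\geq P_i$, so all interfering local tasks belong to $hep_{i,l}$, which under the six-category partition of Section \ref{sec:alteredAnalysis} corresponds exactly to categories \circled{C}, \circled{D}, and \circled{E}. Next I would argue that any job of $\tau_h \in hep_{i,l}$ contributing execution time within $L_{i,l}$ must have been released within $L_{i,l}$: a job released strictly before the start of the active period could not carry pending workload into it (otherwise that instant would not mark the start of a level-i active period), and a job released after $L_{i,l}$ contributes nothing by definition.

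Given this, I would bound the total interference from $\tau_h$ by the number of jobs of $\tau_h$ released in $L_{i,l}$ times the maximum work each such job can execute, which is at most $C_h = C_h^r + C_h^e + C_h^w$. Observe that under preemption thresholds this still holds: whether $\tau_h$ actually preempts $\tau_i$ (categories \circled{D} and \circled{E}) or merely executes ahead of $\tau_i$ inside the active period (category \circled{C}), the amount of processor time it can steal from $\tau_i$ is bounded by its full WCET across all phases. Summing over all $\tau_h \in hep_{i,l}$ yields $I_i(L_{i,l})$ as stated.

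The main obstacle will be ruling out the potential double-count or omission at the category boundary: tasks in \circled{C} do not preempt $\tau_i$ once $\tau_i$ has entered its R-phase (since their nominal priority lies below $\theta_i$), but they may still execute ahead of $\tau_i$ when they arrive before $\tau_i$ becomes the running task, and tasks in categories \circled{D} and \circled{E} can preempt $\tau_i$'s E-phase. I would make explicit that both modes together are bounded by the $\eta^+_h(L_{i,l}) \cdot C_h$ term and that blocking from \circled{A}, \circled{B}, \circled{F} is treated in the separate $B_i$ term, so no piece of delay is missed and none is counted twice.
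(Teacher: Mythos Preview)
Your proposal is correct and follows essentially the same route as the paper: identify $hep_{i,l}$ with categories \circled{C}$\cup$\circled{D}$\cup$\circled{E}, bound each interfering task's job count by $\eta^+_h(L_{i,l})$, charge each job its full $C_h$, and relegate \circled{A}, \circled{B}, \circled{F} to the separate blocking term $B_i$. One minor slip: tasks in \circled{D} satisfy $P_j \le \theta_i$ and therefore cannot preempt $\tau_i$'s E-phase either (only \circled{E} can), but this does not affect the validity of your bound since, as you yourself argue, both the preempting and the executing-ahead modes are covered by the same $\eta^+_h(L_{i,l}) \cdot C_h$ term.
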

\begin{proof}
    Only tasks with a higher or equal nominal priority to $\mathit{\tau_i}$'s nominal priority can cause interference to it. Thus, out of the categories \circled{A} - \circled{F}, any task $\mathit{\tau_h}$ assigned to the local core in categories \circled{C}, \circled{D} and \circled{E} can interfere with $\mathit{\tau_i}$ as they have a higher or equal nominal priority than $\mathit{\tau_i}$. All the local tasks in categories \circled{C}, \circled{D}, and \circled{E} can be grouped as higher or equal priority tasks on the local core, i.e., $\mathit{\forall \tau_h \in (\circled{C} \cup \circled{D} \cup \circled{E}) \in hep_{i,l}}$.  To upper bound the maximum intra-core interference on a task $\tau_i$, the maximum number of jobs that can be released by a task $\mathit{\tau_h}$ within the duration $\mathit{L_{i,l}}$ must be bounded, which is represented by $\mathit{\eta^+_h(t)}$. Each time a task $\mathit{\tau_h}$ is released, it can interfere with $\mathit{\tau_i}$ for a duration of its total execution time. Thus, to upper-bound the maximum intra-core interference, we multiply the maximum number of releases by their respective total execution times.
    Tasks in categories \circled{A}, \circled{B}, and \circled{F} can't cause interference on $\mathit{\tau_i}$ as they have lower nominal priorities than $\mathit{\tau_i}$. $\square$
\end{proof}

While tasks in all three categories \circled{C}, \circled{D}, and \circled{E} can interfere with a task $\tau_i$ before it starts executing (before starting the read phase) as they have a higher or equal nominal priority than $\mathit{\tau_i}$, once the task starts to execute, only a task $\mathit{\tau_h}$ in category \circled{E} can interfere during the execution phase by preempting $\mathit{\tau_i}$ as $\mathit{\theta_i < P_h}$ in category \circled{E}. 
We make this distinction later in Section \ref{sec:calculate_WCRT}.

\subsubsection{Bounding Intra-Core Blocking}
The maximum blocking a task $\mathit{\tau_i}$ can experience from lower-priority tasks on the same core is bounded in Equation \ref{Eq:IntraB}.
\begin{lemma}
\label{lem:intraB}
    The maximum intra-core blocking that can be suffered by a task $\mathit{\tau_i}$ due to lower priority tasks assigned to the local core during the time period $\mathit{L_{i,l}}$ is upper bounded by $\mathit{B_i}$, where
\begin{equation}
 \label{Eq:IntraB}
     B_i = \max( \max_{\forall \tau_j \in \circled{A}} {C_i^r}, \max_{\forall \tau_j \in \circled{A}} {C_i^w} ,\max_{\tau_j \in \circled{B} \cup \circled{F}} C_j)
 \end{equation}
\end{lemma}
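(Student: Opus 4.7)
The plan is to classify every lower-priority local task according to whether its elevated priority (during its non-preemptive memory phases, or, after threshold elevation, during its entire execution) can keep $\tau_i$ from running, and then to take a single-blocker maximum. First I would invoke the standard limited-preemptive argument that within a level-$i$ active period only one lower-priority job can contribute blocking to $\tau_i$: once $\tau_i$ (or a higher-or-equal-priority job) becomes the highest-priority ready task on the local core, any further lower-priority job cannot begin a blocking portion, because at every dispatch instant the scheduler selects the ready task with the highest effective (possibly elevated) priority. Hence it suffices to upper bound the blocking contribution of a single already-running lower-priority job.

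Next I would split the local lower-priority tasks into the categories identified earlier, namely \circled{A}, \circled{B}, and \circled{F}, and bound the blocking contribution of each. For \circled{A} we have $P_j \le \theta_j < P_i$, so the elevated priority during $\tau_j$'s read and write phases is still below $P_i$, but those phases are non-preemptive and can therefore still block $\tau_i$. The execute phase of such a task, by contrast, cannot block $\tau_i$, since as soon as the read phase ends the threshold elevation lapses for scheduling purposes and $\tau_i$ (with priority $P_i > \theta_j$) can preempt. The \circled{A}-contribution is thus bounded by the longest in-progress $C_j^r$ or $C_j^w$ of a task in \circled{A} (interpreting the statement's $C_i^r,C_i^w$ as $C_j^r,C_j^w$ of the blocking job).

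For \circled{B} and \circled{F}, the preemption threshold satisfies $\theta_j \ge P_i$, so once $\tau_j$ starts its read phase and elevates to $\theta_j$, $\tau_i$ (whose nominal priority is $P_i \le \theta_j$) cannot preempt any of $\tau_j$'s three phases until $\tau_j$ completes. Consequently, a single in-progress job of $\tau_j \in \circled{B} \cup \circled{F}$ can block $\tau_i$ for its full duration $C_j$. Taking the worst case over the three disjoint categories and applying the single-blocker property yields exactly the claimed expression for $B_i$. Categories \circled{C}, \circled{D}, \circled{E} have $P_j \ge P_i$, so they contribute to interference rather than blocking and are accounted for by Lemma~\ref{lem:intraI}.

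The main obstacle I anticipate is formalizing the single-blocker property under preemption thresholds, in particular ruling out scenarios where a task in \circled{A} could contribute blocking twice within $L_{i,l}$ (for instance, one $C_j^r$ followed later by a $C_j^w$, or two different \circled{A} jobs each contributing a phase). The plan is to argue that $L_{i,l}$, by Definition~\ref{def:activeP}, begins at an instant when level-$i$ pending load transitions from zero to positive, so the only blocking job is the one whose non-preemptive portion is in progress at that instant; any subsequent lower-priority memory phase cannot be dispatched inside $L_{i,l}$ because at least one higher-or-equal-priority ready request is always present, and the local scheduler would select it over the lower-priority task.
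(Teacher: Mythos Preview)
Your proposal is correct and follows essentially the same line as the paper: invoke the single-blocker property of the level-$i$ active period, then case-split on categories \circled{A}, \circled{B}, \circled{F}, bounding \circled{A} by one non-preemptive memory phase and \circled{B}$\cup$\circled{F} by a full $C_j$, and take the overall maximum. One small wording fix: the threshold elevation does \emph{not} lapse when the read phase ends (it persists through the write phase); the reason $\tau_i$ can preempt a category-\circled{A} task during its execute phase is simply that $P_i > \theta_j$, which is what the paper states and which your conclusion already relies on.
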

\begin{proof}
    Out of the tasks assigned to the same core as $\mathit{\tau_i}$, only the tasks that have a nominal priority lower than $\mathit{\tau_i}$'s nominal priority (i.e., lower-priority tasks) can cause intra-core blocking, i.e., tasks in categories \circled{A}, \circled{B}, and \circled{F}.
    Intra-core blocking can only occur once during $\mathit{L_{i,l}}$ by only one lower priority task on the local core. By the end of the blocking, the fixed-priority scheduler schedules $\mathit{\tau_i}$ or the highest-priority ready task. In addition, for the rest of the active period, there will be pending load from higher or equal priority tasks (see definitions \ref{def:pending} and \ref{def:activeP}); therefore, a lower priority task can not be scheduled until the end of the active period. The only possible intra-core blocking occurs when $\mathit{\tau_i}$ is released during a non-preemptive phase of a lower-priority task. The worst-case scenario is when a lower priority task starts its non-preemptive phase just before $\mathit{\tau_i}$'s release. 
    To account for the worst-case scenario, the maximum blocking that can be caused by a task in groups \circled{A}, \circled{B}, and \circled{F} is considered in Equation \ref{Eq:IntraB}. For this, we need to determine the maximum duration a lower priority task in each category can block $\mathit{\tau_i}$.
    Out of these lower-priority tasks, tasks in category \circled{A} can only block task $\mathit{\tau_i}$ for a maximum of either read or write phase time due to non-preemptive memory phases. Upon completion of the memory phase, $\mathit{\tau_i}$ can preempt a lower priority task in category \circled{A}.
    However, the tasks in \circled{B} and \circled{F} can block $\mathit{\tau_i}$ for their total execution times. If a task $\mathit{\tau_j} \in \circled{B} \cup \circled{F}$ starts to execute just before $\mathit{\tau_i}$ becomes ready, $\mathit{\tau_i}$ is not able to preempt $\mathit{\tau_j}$ even after its non-preemptive read phase is completed due to the fact that $\mathit{\tau_j}$'s priority is raised to its preemption threshold. Since $\mathit{P_i < \theta_j}$, $\mathit{\tau_i}$ can not preempt $\mathit{\tau_j}$. Thus, will experience a blocking time of $\mathit{\tau_j}$'s total execution time.
    To account for the maximum blocking that can be suffered, the maximum blocking that can be caused by a task in groups \circled{A}, \circled{B}, and \circled{F} is calculated in equation \ref{Eq:IntraB}. $\square$
\end{proof}

\subsubsection{Bounding Inter-Core Interference}
This section bounds the maximum inter-core interference that can be caused on a task $\mathit{\tau_i}$.
Preemption thresholds only affect tasks on the local core and have no effect globally. Thus, the inter-core interference is not affected by preemption threshold assignments and is the same as for fully preemptive E-phases \cite{arora2022analyzing}. Inter-core interference is bounded using Lemma \ref{lem:interI}. For self-containment, we briefly explain the lemma; however, for the complete proof, refer to Lemma 1 in \cite{arora2022analyzing}.
\begin{lemma}
\label{lem:interI}
    The maximum inter-core interference that can be suffered by a task $\mathit{\tau_i}$ due to higher or equal priority tasks running on remote cores during $\mathit{L_{i,l}}$ is upper bounded by $\mathit{I_i^{mem}(L_{i,l})}$ where
  \begin{equation}
  \label{Eq:InterI}
 \begin{split}
     I_i^{mem}(L_{i,l}) = \sum_{r=1, r \neq l}^{m} \sum_{\tau_u \in hep_{i,r}} \eta^+_u (L_{i,l}) \times (C_u^r + C_u^w)
 \end{split}
 \end{equation}
    
\end{lemma}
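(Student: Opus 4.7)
The plan is to reuse the structure of the original fully-preemptive inter-core analysis from \cite{arora2022analyzing}, since the only new ingredient in our setting---the preemption threshold---is a purely local-core mechanism and does not influence the arbitration of the shared bus. Bus arbitration between memory phases of different cores is governed by nominal priorities, and, as stated in the execution model, write phases also acquire the bus at their nominal priority. Hence the sources and the magnitude of inter-core delay are identical to the fully-preemptive case, and I only need to re-derive the bound from first principles without invoking any threshold-related argument.

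First, I would observe that execution phases never compete for the shared bus, so only read and write phases issued on remote cores can contribute to $I_i^{mem}(L_{i,l})$. Second, since bus arbitration is priority-based and each memory phase runs non-preemptively once it starts, a remote memory phase can delay a memory phase of $\tau_i$ only if its nominal priority is at least $P_i$; hence only tasks $\tau_u \in hep_{i,r}$ are candidates for inter-core interference, while lower-priority remote tasks contribute exclusively to the separate term $B_i^{mem}$. Third, each activation of such a $\tau_u$ can occupy the bus for a cumulative duration of at most $C_u^r + C_u^w$ against $\tau_i$, and by definition of $\eta_u^+$ at most $\eta_u^+(L_{i,l})$ activations of $\tau_u$ can be released inside $L_{i,l}$.

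Summing these per-task contributions over every remote core $r \neq l$ and every $\tau_u \in hep_{i,r}$ yields exactly the right-hand side of Equation~\ref{Eq:InterI}. The self-pushing aspect used to motivate analysing the whole active period on the local core does not create additional inter-core phenomena, because the bus discipline is independent of the local scheduling state and any deferred memory phase is already accounted for by counting all releases of $\tau_u$ within $L_{i,l}$.

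The main obstacle, and the point that must be argued carefully, is that preemption thresholds active on a remote core do not perturb the way its memory phases appear on the bus. A remote task's read phase may be enqueued while the local scheduler on that core is honouring another task's raised threshold, but the bus arbiter observes only nominal priorities; therefore the reduction to the fully-preemptive setting is sound, and the bound follows from Lemma~1 of \cite{arora2022analyzing}. $\square$
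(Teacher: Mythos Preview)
Your proposal is correct and follows essentially the same approach as the paper: both argue that preemption thresholds are a purely local-core mechanism with no effect on bus arbitration (which uses nominal priorities), so the inter-core interference bound reduces to the fully-preemptive case of Lemma~1 in \cite{arora2022analyzing}. Your write-up is in fact more detailed than the paper's own proof, which simply notes that the relevant remote tasks are those in categories \circled{C}, \circled{D}, \circled{E} (i.e., $hep_{i,r}$), states the counting argument, and defers to \cite{arora2022analyzing} for the rest.
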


\begin{proof}
    Higher or equal priority tasks that are assigned to remote cores can cause interference to $\mathit{\tau_i}$ when scheduling its memory phases (read and write) on the shared interconnect, i.e., $\mathit{\forall \tau_j \in (\circled{C} \cup \circled{D} \cup \circled{E}) \in hep_{i,r}}$. Thus, to get the maximum memory interference, the maximum number of times each higher or equal priority task is released during $\mathit{L_{i,l}}$ is multiplied by the addition of their respective read and write phase times. See \cite{arora2022analyzing} for detailed proof. $\square$
\end{proof}

\subsubsection{Bounding Inter-Core Blocking}

\begin{figure}[b]
    \centering
    \includegraphics[width=\linewidth]{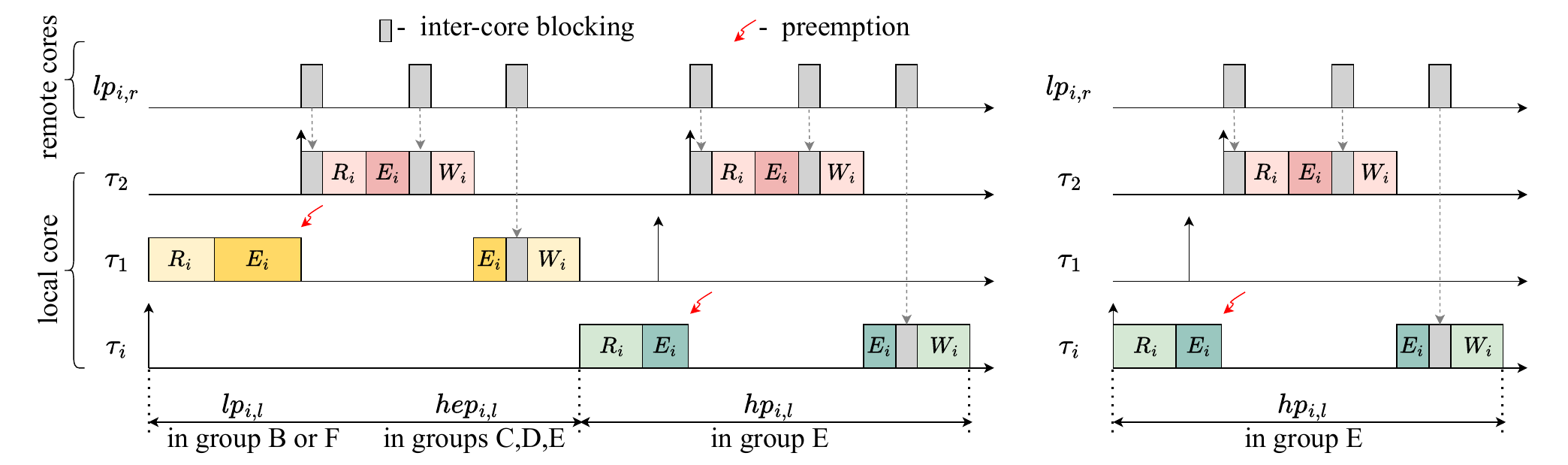}
    \caption{An example of direct and indirect inter-core blockings caused on $\tau_i$. $\tau_1$ is a task in group B or F. $\tau_2$ is in group E. Scenario 1 on right, scenario 2 on left.}
    \label{fig:InterB}
\end{figure}

The inter-core blocking is also not affected by preemption thresholds, as nominal priorities are used for scheduling memory phases globally. 
Thus, we use the same approach as Arora et al. \cite{arora2022analyzing} that bounds inter-core blocking for fully-preemptive E-phases, while bounding $\mathit{\Phi_i(L_{i,l})}$ below is different from  \cite{arora2022analyzing} when preemption thresholds are considered. Three steps are followed to bound $\mathit{B^{mem}_i}$.

\begin{enumerate}
    \item Bounding the maximum number of inter-core memory blockings that can be suffered ($\mathit{\Phi_i(L_{i,l})}$).
    \item Bounding the maximum number of inter-core memory blockings that can be caused ($\mathit{\mu_i(L_{i,l})}$).
    \item Upper bouncing the maximum inter-core memory blocking during the active period, $\mathit{L_{i,l}}$.
\end{enumerate}

These steps are followed to reduce pessimism as the maximum number of inter-core memory blockings that can be suffered by a task $\mathit{\tau_i}$ is not always the same as the maximum number of inter-core memory blockings that can be caused by lower priority tasks on remote cores.

\smallskip\noindent\textbf{Bounding $\mathit{\Phi_i(L_{i,l})}$:} The maximum number of inter-core memory blockings that can be suffered is given in Lemma \ref{lem:phi}.

\begin{lemma}
\label{lem:phi}
    The maximum number of times a task $\mathit{\tau_i}$ executing on the local core can suffer inter-core memory blocking during $\mathit{L_{i,l}}$ is upper bounded by $\mathit{\Phi_i(L_{i,l})}$, where
    \begin{equation}
    \label{Eq:suffered}
      \Phi_i(L_{i,l}) = 2 + \sum_{\tau_h \in hep_{i,l}} \eta^+_h (L_{i,l}) \times 2
    \end{equation}
\end{lemma}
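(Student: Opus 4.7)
The plan is to count, over the level-$i$ active period $L_{i,l}$, the memory phases of nominal priority at least $P_i$ that $\tau_i$ must wait for (either its own, or those of higher-or-equal priority local tasks whose completion blocks $\tau_i$'s progress), and then to argue that each individual such memory phase can absorb at most one inter-core blocking event from a lower-priority remote memory phase.

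First I would establish the per-phase bound. Since all memory phases are non-preemptive on the shared bus and bus arbitration is performed globally by nominal priority, as soon as a blocking lower-priority remote phase completes the bus scheduler must select the highest-priority pending phase; hence as long as a phase of priority $\geq P_i$ is pending, no second consecutive lower-priority remote phase can be chosen immediately after. This reasoning is unaffected by preemption thresholds, because thresholds only modify local CPU scheduling and not bus arbitration. I would therefore inherit the single-block-per-phase argument of Lemma~1 in \cite{arora2022analyzing} essentially unchanged, and then only redo the enumeration of blockable phases.

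Second I would enumerate the memory phases of priority $\geq P_i$ that appear in $L_{i,l}$. The constant $2$ accounts for $\tau_i$'s own read and write phases; the sum $\sum_{\tau_h \in hep_{i,l}} \eta^+_h(L_{i,l}) \times 2$ accounts for one read plus one write per release of each higher-or-equal-priority task assigned to the local core, i.e.\ of the tasks in categories \circled{C}, \circled{D}, and \circled{E} restricted to $\pi_l$. Because $\eta^+_h(L_{i,l})$ already upper-bounds the number of releases of $\tau_h$ in the active period, multiplying by $2$ (the number of memory phases per release) gives the correct indirect contribution. Summing direct and indirect contributions and combining with the per-phase bound of one yields the claimed $\Phi_i(L_{i,l}) = 2 + \sum_{\tau_h \in hep_{i,l}} \eta^+_h(L_{i,l}) \times 2$.

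The main obstacle I anticipate is the conceptual one of getting the right set of tasks into $hep_{i,l}$ under preemption thresholds. Categories \circled{C} and \circled{D} cannot preempt $\tau_i$'s execution phase, so they behave quite differently from \circled{E} in the intra-core analysis (Lemma~\ref{lem:intraI}); nonetheless their memory phases still compete for the bus at nominal priority and, whenever they are blocked by a remote lower-priority phase, $\tau_i$ is delayed transitively because those tasks are executing ahead of $\tau_i$ on $\pi_l$. I would justify inclusion by observing that $\Phi_i(L_{i,l})$ counts \emph{suffered} blocking, which depends only on who is executing ahead of $\tau_i$ on the local core and on nominal-priority bus arbitration, not on who could preempt whom's execution phase. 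A secondary subtlety is ensuring no double counting between the constant $2$ and the sum; this follows from the convention $\tau_h \neq \tau_i$ in $hep_{i,l}$.
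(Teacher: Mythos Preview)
Your formula is correct, but the justification for the constant $+2$ has a gap. You restrict the enumeration to local memory phases of nominal priority at least $P_i$, namely $\tau_i$'s own two phases and those of $hep_{i,l}$. This misses a case that arises precisely because of preemption thresholds: if $\tau_i$ is released while a lower-priority local task $\tau_b$ in category \circled{B} or \circled{F} is executing (so $P_b < P_i \le \theta_b$ and $\tau_i$ cannot preempt), then $\tau_b$'s write phase---which has nominal priority \emph{below} $P_i$---can itself be inter-core blocked by a remote lower-priority memory phase that started during $\tau_b$'s E-phase. That blocking delays $\tau_i$ transitively and must be counted in $\Phi_i$, yet neither your per-phase argument (``phases of priority $\ge P_i$ absorb at most one block'') nor your enumeration covers it.

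The paper's proof avoids this by counting E-phase-to-memory-phase transitions on the local core rather than memory phases per se, and then performing a two-scenario case split. In Scenario~1 ($\tau_i$ preempts a category-\circled{A} task), the $+2$ indeed comes from $\tau_i$'s R (after the preempted E-phase) and $\tau_i$'s W (after its own E), as you claim. But in Scenario~2 ($\tau_i$'s R is scheduled after a memory phase), $\tau_i$'s R \emph{cannot} be blocked since no E-phase intervenes; here the $+2$ decomposes as one direct blocking on $\tau_i$'s W plus one indirect blocking on the intra-core blocking task's W. That both scenarios yield exactly $+2$ is the substantive content of the proof. Your argument arrives at the right number only because it over-counts $\tau_i$'s R in Scenario~2 while simultaneously under-counting $\tau_b$'s W, and this cancellation is not established by your reasoning.
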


\begin{proof}
    Every time an E-phase is executed on the local core, there is a possibility for a memory phase to be scheduled on a remote core. In addition, it is only during the E-phases that the local
    core can not schedule any memory phases during the level-i active window, and instead, a lower priority memory phase on a remote core can be scheduled on the shared interconnect. Thus, every time a memory phase is scheduled on the local core after an E-phase, e.g., at time $t$, there is a possibility that it is blocked by a lower-priority memory phase on a remote core that was scheduled just before it was ready, i.e., at $t-\epsilon$ where $\epsilon $ is a very small time duration, (see Figure \ref{fig:InterB}). Therefore, in order to determine the maximum number of times $\mathit{\tau_i}$ can suffer inter-core memory blocking, the maximum number of times a memory phase may execute after an E-phase on the local core during $\mathit{L_{i,l}}$ must be determined. This includes possible inter-core memory blocking before $\mathit{\tau_i}$'s own memory phases (direct) and before memory phases of other blocking and interfering tasks on the local core during $\mathit{L_{i,l}}$ (indirect).
    There are two and only two possible scenarios. 1) $\mathit{\tau_i}$'s read phase 
    is scheduled after an E-phase, and 2) $\mathit{\tau_i}$'s read phase is scheduled after a memory phase. 
    
    Scenario 1 only occurs if $\mathit{\tau_i}$ is released during an E-phase of a task that it can preempt, i.e., category \circled{A}. In this case, there can be direct inter-core memory blocking before both $\mathit{\tau_i}$'s memory phases. Its read phase starts after an E-phase of the lower priority task it preempted, and $\mathit{\tau_i}$'s write phase is scheduled after its own E-phase.
    In addition, $\mathit{\tau_i}$ can experience inter-core memory blocking indirectly due to higher priority tasks during its E-phase and higher or equal priority tasks after, depending on how many instances $\mathit{\tau_i}$ has within $\mathit{L_{i,l}}$. To elaborate more, during its execution phase, $\mathit{\tau_i}$ can experience interference (through preemption) from tasks that have a priority higher than its preemption threshold (\circled{E}). If the active period continues to more than one instance of $\mathit{\tau_i}$, there can be inter-core memory blocking due to the higher or equal priority tasks (\circled{C}, \circled{D}, and \circled{E}) that may execute between executions of $\mathit{\tau_i}$'s instances within the active period. A maximum of 2 indirect memory blocking can occur due to each memory phase of $hep_{i,l}$.
    However, in scenario 1, there can't be any indirect inter-core memory blocking due to lower-priority tasks, i.e., categories \circled{A}, \circled{B}, and \circled{F}. This is because there can only be an execution of such a task \emph{once} during $\mathit{L_{i,l}}$ (see Lemma \ref{lem:intraB}), and it does not occur when $\mathit{\tau_i}$ is scheduled by preemption. Thus, the maximum number of inter-core memory blocking that can be suffered in scenario 1 is $\sum_{\tau_h \in hep_{i,l}} \eta^+_h (L_{i,l}) \times 2 + 2$ (+2 from direct inter-core memory blocking). 
    
    If not scenario 1, then the read phase of $\mathit{\tau_i}$ is scheduled after a memory phase on the local core, which falls under scenario 2. In this case, there can only be 1 direct inter-core memory blocking before scheduling $\mathit{\tau_i}$'s write phase after its own E-phase, as a lower priority memory phase can't be scheduled on a remote core before its read phase. There can be indirect inter-core memory blocking when scheduling memory phases of higher or equal priority tasks after E-phases, i.e., categories \circled{C}, \circled{D}, and \circled{E}, and a memory phase of at most one lower priority task in either category \circled{B} or \circled{F} (see Lemma \ref{lem:intraB}) after its E-phase. 
    The number of intra-core interfering and blocking tasks are presented in Lemmas \ref{lem:intraI} and \ref{lem:intraB}, respectively. 
    For each higher or equal priority task that is released during $\mathit{L_{i,l}}$, there can be 2 inter-core memory blocking occurrences. However, there can only be one inter-core memory blocking suffered due to the possible intra-core blocking task. That is when it schedules its write phase. Possible inter-core memory blocking on its read phase is outside $\mathit{L_{i,l}}$ as for this situation to occur, the read phase of the intra-core blocking task must have started at least $\epsilon$ time before the release of $\mathit{\tau_i}$.
    There can't be any indirect inter-core memory blocking due to lower priority tasks in category \circled{A} as an E-phase, and neither of their memory phases is started during $\mathit{L_{i,l}}$ (see Lemma \ref{lem:intraB}).
    Thus, the maximum number of inter-core memory blocking that can be suffered in scenario 1 is also $\sum_{\tau_h \in hep_{i,l}} \eta^+_h (L_{i,l}) \times 2 + 2$ (+1 from direct inter-core memory blocking and +1 from indirect inter-core memory blocking on one lower priority task).
    Thus, in either scenario, $\Phi_i(L_{i,l})$ is the same and is what is presented in Lemma \ref{lem:phi}.
    $\square$
    
\end{proof}

Figure \ref{fig:InterB} shows an example of how a task $\mathit{\tau_i}$ can experience direct and indirect inter-core memory blocking.
The number of inter-core blocking that can be \emph{suffered} by a task $\mathit{\tau_i}$ is bounded in Equation \ref{Eq:suffered}, by bounding the maximum number of direct and indirect memory blockings
where $\forall \tau_j\in (\circled{C}, \circled{D}, \circled{E}) \in hep_{i,l}$.
The task groups that must be considered when bounding the number of indirect memory blocking are different before the start time of $\mathit{\tau_i}$'s read phase and after (see Figure \ref{fig:InterB}). This is considered later in Section \ref{sec:calculate_WCRT}.


\smallskip\noindent\textbf{Bounding $\mathit{\mu_i(L_{i,l})}$:}
Next, the maximum number of inter-core memory blockings that can be \emph{caused} must be calculated. Not being affected by preemption thresholds, this value is the same as in fully preemptive E-phases, that is, every time a lower priority task is released on a remote core \cite{arora2022analyzing}. To be self-contained we have briefly explained how to determine $\mathit{\mu_i(L_{i,l})}$ in Lemma \ref{lem:mu}, for the complete proof refer to Lemma 3 in \cite{arora2022analyzing}.

\begin{lemma}
\label{lem:mu}
   The maximum number of times inter-core memory blocking can be caused by a lower priority tasks running on remote cores on a task $\mathit{\tau_i}$ executing on the local core during $\mathit{L_{i,l}}$ is upper bounded by $\mathit{\mu_i(L_{i,l})}$, where
    \begin{equation}
    \label{EQ:caused_blocking}
     \mu_i(L_{i,l}) = \sum_{r=1, r \neq l}^{m}  \sum_{\tau_q \in lp_{i,r}} \eta^+_q (L_{i,l}) \times 2
    \end{equation}
\end{lemma}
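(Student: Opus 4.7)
The plan is to show that Lemma \ref{lem:mu} follows by essentially the same argument as in Arora et al., since preemption thresholds leave global memory arbitration untouched. The key observation I would establish first is that access to the shared interconnect is decided solely using nominal priorities: preemption thresholds are raised only on the task's local core and only affect whether \emph{local} lower-priority tasks may proceed, but they never alter which core wins the bus. Consequently, the way lower-priority tasks on remote cores can cause memory blocking is identical to the fully-preemptive E-phase setting, and the bound $\mu_i(L_{i,l})$ does not need any new machinery beyond job-arrival counting.

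Next I would argue at the per-job level. An inter-core memory blocking is \emph{caused} when a lower-priority memory phase from a remote core starts execution on the bus at some time $t-\epsilon$ and thereby delays a higher- or equal-priority memory phase that becomes ready at $t$; once a memory phase starts on the bus it runs non-preemptively to completion by the execution model. Therefore each individual memory phase (R or W) of a remote lower-priority task $\tau_q \in lp_{i,r}$ can be the root cause of at most one such blocking event. Since every job of $\tau_q$ contributes exactly one R-phase and one W-phase, each job accounts for at most $2$ potential blocking causes.

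I would then bound the number of jobs of each $\tau_q$ released during the level-$i$ active window $L_{i,l}$ by $\eta^+_q(L_{i,l})$, reusing the same arrival-bound function already invoked in Lemmas \ref{lem:intraI} and \ref{lem:interI}. Summing $2 \cdot \eta^+_q(L_{i,l})$ over all $\tau_q \in lp_{i,r}$ and over all remote cores $r \neq l$ yields exactly Equation~\ref{EQ:caused_blocking}. Finally, I would close by referencing that this quantity counts blockings \emph{caused}, not necessarily \emph{suffered} by $\tau_i$, so it is safely an upper bound that will later be combined with $\Phi_i(L_{i,l})$ from Lemma~\ref{lem:phi}.

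The main obstacle here is not technical depth but conceptual precision: I must be explicit that (i) a memory phase causes a blocking at most once because it is non-preemptive and monolithic on the bus, and (ii) preemption thresholds do not enlarge the set of potentially-blocking remote phases beyond $lp_{i,r}$, because any remote task with nominal priority $\geq P_i$ is already accounted for as inter-core interference in Lemma~\ref{lem:interI}. Once those two points are made rigorous, the counting argument is routine and the proof reduces to citing Lemma 3 of~\cite{arora2022analyzing} for the detailed bookkeeping.
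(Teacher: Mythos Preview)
Your proposal is correct and follows essentially the same approach as the paper: both argue that preemption thresholds do not affect global bus arbitration (only nominal priorities are used), that each lower-priority remote job contributes at most two blocking causes via its R- and W-phases, and both ultimately defer the detailed bookkeeping to Lemma~3 of~\cite{arora2022analyzing}. If anything, your write-up is more explicit than the paper's own proof, which is terse and largely a citation.
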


\begin{proof}
    Only lower priority tasks on remote cores can cause inter-core memory blocking to $\mathit{\tau_i}$. To determine the maximum number of possible blockings caused, each job released on any remote core during the active period must be accounted (Equation \ref{EQ:caused_blocking}), i.e., $\mathit{\forall \tau_j \in (\circled{A}, \circled{B}, \circled{F}) \in lp_{i,r}}$. Each task in $lp_{i,r}$ can cause 2 memory blockings, i.e., read and write phases. For detailed proof, see Lemma 3 in \cite{arora2022analyzing}.$\square$ 
\end{proof}

Once the maximum number of inter-core blockings that can be caused and suffered is bounded, the final step is to bound the inter-core blocking within the active period, which is the same for preemption threshold scheduling as fully preemptive E-phases. Thus, we use the same equations as in \cite{arora2022analyzing}. Only the equations are stated here. For details on proofs and reasoning, we refer to Lemmas 4 and 5 in \cite{arora2022analyzing}.

\textbf{Case 1:} When $\Phi_i\mathit{(L_{i,j}) \geq \mu_i(L_{i,j})}$: 
Memory blocking can not be suffered by $\mathit{\tau_i}$ more than what can be caused by lower priority tasks on remote cores. When the number of memory blocking that can be caused is less than that which can be suffered, $\mathit{B^{mem}_i}$ is bounded by taking the sum of the memory phase times of all lower priority tasks released on remote cores (see Equation \ref{Eq:B_mem1}).
   \begin{equation}
   \label{Eq:B_mem1}
      B^{mem}_i(L_{i,l}) = \sum_{r=1, r \neq l}^{m}  \sum_{\tau_q \in lp_{i,r}} \eta^+_q (L_{i,l}) \times (C_q^r + C^w_q)
  \end{equation}
 

\textbf{Case 2:} When $\Phi_i\mathit{(L_{i,j}) < \mu_i(L_{i,j})}$:
Memory blocks can not be caused by lower priority tasks on remote cores more than what can be suffered by $\mathit{\tau_i}$. Thus, when the number of memory blocks that can be caused is more than that which can be suffered, $\mathit{B^{mem}_i}$ is bounded by summing the largest memory phases of lower priority tasks on remote cores that can cause memory blocks up to $\Phi_i\mathit{(L_{i,j})}$ (see Equation \ref{Eq:B_mem2}).
   \begin{equation}
     M = \{{C^{A/R}_1, C^{A/R}_2,..., C^{A/R}_V | C^{A/R}_x \geq C^{A/R}_{x+1}}\}
 \end{equation}
 
$M$ contains execution times of read and write memory phases of all lower priority tasks released on all the remote cores during any time interval of length $\mathit{L_{i,l}}$ sorted in decreasing order. Where, $\mathit{C^{A/R}_x} \in M$.
   \begin{equation}
   \label{Eq:B_mem2}
     B^{mem}_i(L_{i,l}) = \sum_{x=1}^{ \Phi_i(L_{i,l})}  C^{A/R}_x
 \end{equation}

Thus, depending on the values of $\Phi_i\mathit{(L_{i,j})}$ and $\mathit{\mu_i(L_{i,j})}$, either Equation \ref{Eq:B_mem1} or \ref{Eq:B_mem2} is used to bound the maximum inter-core blocking.

\subsection{Bounding the WCRT}
\label{sec:calculate_WCRT}

This section presents how to calculate the WCRT using the inter- and intra-core blocking and interference factors discussed in the previous section. 
Several steps are followed. 
The first step is to calculate the \emph{longest} level-i active period, $\mathit{L_{i,l}}$ using the bounded $I_i$, $B_i$, $I^{mem}_i$, and $B^{mem}_i$ values that contribute to it (Equation \ref{Eq:calc_busyPeriod}).
For a sporadic task $\tau_j$, the maximum number of releases during a time duration $\mathit{L_{i,l}}$, $\mathit{\eta^+_j(L_{i,l})}$ is given by $\mathit{\left\lceil{\frac{L_{i,l}}{T_j}}\right\rceil}$ \cite{richter2003scheduling}.
Thus, in Equation \ref{Eq:calc_busyPeriod} it is used when bounding $I_i$, $B_i$, $I^{mem}_i$, and $B^{mem}_i$ using their respective equations in Section \ref{sec:alteredAnalysis}.
When calculating $\mathit{I_i(L_{i,l})}$ and $\Phi_i$ for $\mathit{B_i^{mem}(L_{i,l})}$, higher or equal priority tasks in categories \circled{C}, \circled{D} and \circled{E} must be considered, i.e., $\mathit{\forall \tau_j \in (\circled{C}, \circled{D}, \circled{E}) \in hep_{i,l}}$ to account for the longest active period. 
$\mathit{L_{i,l}}$ is given by the first positive fixed-point solution of the recurrent relation in Equation \ref{Eq:calc_busyPeriod}.
\begin{equation}
\label{Eq:calc_busyPeriod}
\begin{split}
     L_{i,l} = I_i(L_{i,l}) + B_i + \eta^+_i(L_{i,l}) \times C_i + I_i^{mem}(L_{i,l}) + B_i^{mem}(L_{i,l}) 
\end{split}     
\end{equation}

Then the maximum number of jobs released within its active period, $K_i$, for task $\mathit{\tau_i}$ is calculated using Equation \ref{Eq:cal_no_of_jobs}.
\begin{equation}
\label{Eq:cal_no_of_jobs}
     K_i = \eta^+_i(L_{i,l}) =  \left\lceil{\frac{L_{i,l}}{T_i}}\right\rceil
 \end{equation}

In the example scenario illustrated in Figure \ref{fig:active_period_example}, $K_i = 2$. 
The response time must be calculated for each job, $\tau_{i,k}$, within the active period (see Section \ref{sec:AroraAnalysis}). 

Groups of tasks that can interfere with a task $\tau_i$ are different before and after the priority is raised to its preemption threshold. This fact can be considered when calculating the response time of each job, as also done with conventional tasks with preemption thresholds \cite{wang_partitioned_multiCore, regehr2002scheduling, wang1999scheduling, Buttazzo2024}, which results in a more tailored analysis. We follow the same basic steps followed by these works proposed for conventional tasks on single-core systems and adapt them for our 3-phase tasks on multi-core systems. Three main steps are followed, 1) calculating the start time of each job within $\mathit{L_{i,l}}$, 2) calculating the end time of each job in $\mathit{L_{i,l}}$ and 3) calculating the response times.

\subsubsection{Calculating the Start Time}
In the first step, the start time of each job, $\mathit{s^r_{i,k}}$, within the active period is calculated, which is given by the first positive fixed-point solution of the recurrent relation in Equation \ref{Eq:cal_start_read_phase}. 
$\mathit{s^r_{i,k}}$ is calculated considering the maximum amount of blocking and interference that can be caused on the $k^{th}$ job before the start of its read phase for the interval (0, $\mathit{s^r_{i,k}}$) as well as execution times for the preceding $k-1$ jobs of $\mathit{\tau_i}$, same as in \cite{Buttazzo2024}. Thus, inter- and intra-core blocking and interference during (0, $\mathit{s^r_{i,k}}$) is added to the execution times for the preceding $k-1$ jobs, $(k-1) \times C_i$.
For example, when calculating the start time of the second job in Figure \ref{fig:active_period_example}, this would be \circled{1}-\circled{5} added to $C_i$.  
When calculating $\mathit{I_i(s^r_{i,k})}$ and $\Phi_i$ for $\mathit{B_i^{mem}(s^r_{i,k})}$, higher or equal priority tasks in categories \circled{C}, \circled{D} and \circled{E} must be considered, i.e., $\mathit{\forall \tau_j \in (\circled{C}, \circled{D}, \circled{E}) \in hep_{i,l}}$ as all these 3 groups can interfere before the read phase start time.
In limited preemptive scheduling, execution of interfering tasks from the previous instances can push the start time of the current job (see section \ref{sec:AroraAnalysis}). Thus, the start times of each job can be different. To identify and account for all such scenarios, $\mathit{\eta^+_h(s^r_{i,k})} = \left\lfloor{\frac{s^r_{i,k}}{T_h}}\right\rfloor + 1$ in Equation \ref{Eq:cal_start_read_phase} same as previous work \cite{Buttazzo2024}.
\begin{equation}
\label{Eq:cal_start_read_phase}
\begin{split}
     s^r_{i,k} = I_i(s^r_{i,k}) + B_i + I_i^{mem}(s^r_{i,k}) + B_i^{mem}(s^r_{i,k}) + (k-1) \times C_i   
\end{split}     
\end{equation}

\subsubsection{Calculating the End Time}
The second step is to calculate the write phase finish time, $f^w_{i,k}$ of each job $\tau_{i,k}$ within the active period using the recurrent Equation \ref{Eq:cal_end_write_phase}.  
We will use the example in Figure \ref{fig:active_period_example} to better explain the steps.
$f^w_{i,k}$ can be calculated by adding the $k^{th}$ job's own execution time, $C_i$ and the inter-core blocking and inter- and intra-core interference during ($\mathit{s^r_{i,k}, f^w_{i,k}}$), that is \circled{6} and \circled{7} in Figure \ref{fig:active_period_example}, to its read phase start time $\mathit{s^r_{i,k}}$ (see Equation \ref{Eq:cal_end_write_phase}), similar to previous work \cite{Buttazzo2024}. Meaning that for the second job in the example in Figure \ref{fig:active_period_example}, $f^w_{i,k} = s^r_{i,k} + \circled{6} + \circled{7} + C_i$.
 To calculate the inter-core blocking and inter- and intra-core interference between $\mathit{s^r_{i,k}}$ and, $\mathit{f^w_{i,k}}$ (\circled{6}, \circled{7} in Figure \ref{fig:active_period_example}), first inter-core blocking and inter- and intra-core interference is calculated for the duration (0, $\mathit{f^w_{i,k}}$) (\circled{i} in the list below). Later, inter-core blocking and inter- and intra-core interference for the duration (0, $\mathit{s^r_{i,k}}$) (\circled{ii} in the list below) is deducted from it.
After $\mathit{s^r_{i,k}}$, only the tasks with a nominal priority higher than $\mathit{\tau_i}$'s preemption threshold can cause interference by preempting it.
Thus in Equation \ref{Eq:cal_end_write_phase}, when calculating $\mathit{I_i(f^w_{i,k})}$, $\mathit{I_i(s^r_{i,k})}$, and $\Phi_i$ for $\mathit{B_i^{mem}(f^w_{i,k})}$ and $\mathit{B_i^{mem}(s^r_{i,k})}$, higher or equal priority tasks \emph{only} in category \circled{E} must be considered, i.e., $\mathit{\forall \tau_j \in \circled{E} \in hep_{i,l}}$. 
These steps are in line with existing analyses for preemption thresholds for conventional tasks on single-core systems \cite{Buttazzo2024}.


\begin{itemize}
    \item [\circled{i}] - $\mathit{\eta^+_h(f^w_{i,k})} = \left\lceil{\frac{f^w_{i,k}}{T_h}}\right\rceil$ when bounding $I_i(f^w_{i,k})$, $I_i^{mem}(f^w_{i,k})$, and $B_i^{mem}(f^w_{i,k})$ as self-pushing is already accounted when calculating $\mathit{s^r_{i,k}}$.

    \item [\circled{ii}] - $\mathit{\eta^+_h(s^r_{i,k})} = \left\lfloor{\frac{s^r_{i,k}}{T_h}}\right\rfloor + 1$ when bounding $I_i(s^r_{i,k})$ $_i + I_i^{mem}(s^r_{i,k})$, and $B_i^{mem}(s^r_{i,k})$ to deduct the interference and blocking already accounted when calculating $\mathit{s^r_{i,k}}$.
\end{itemize}

Intra-core blocking, $B_i$, is not considered in Equation \ref{Eq:cal_end_write_phase}, as it can affect $\mathit{\tau_i}$ only once before $\mathit{s^r_{i,k}}$, which is considered when calculating $\mathit{s^r_{i,k}}$.
\begin{equation}
\label{Eq:cal_end_write_phase}
\begin{split}
     f^w_{i,k} = s^r_{i,k} + C_i  + I_i(f^w_{i,k}) + I_i^{mem}(f^w_{i,k}) + B_i^{mem}(f^w_{i,k}) \\ - (I_i(s^r_{i,k}) + I_i^{mem}(s^r_{i,k}) + B_i^{mem}(s^r_{i,k}))
\end{split}     
\end{equation}




\subsubsection{Calculating the Worst-Case Response Time}
Finally, the third step is to calculate the response time of each job $\tau_{i,k}$ by deducting their release time, $r_{i,k}$, from their write phase finish time.
The worst-case response time of task $\tau_i$ is the maximum out of the response times of all jobs within the active period (see Equation \ref{Eq:cal_WCRT_final}). 
\begin{equation}
\label{Eq:cal_WCRT_final}
\begin{split}
     C^{rt}_{i} = \max_{k \in 0,1,...,\left\lfloor{\frac{L_{i,l}}{T_i}}\right\rfloor} f^w_{i,k} - r_{i,k}
\end{split}     
\end{equation}
\subsubsection{Discussion}
Since our analysis is for limited preemptive scheduling using preemption thresholds, it can be used to analyze 3-phase tasks scheduled using fully preemptive execution phases, as well as non-preemptive scheduling by setting preemption thresholds accordingly.

\section{Memory Requirement analysis}
\label{sec:mem_requirement}


Bounding the worst-case local memory usage is important to determine the memory feasibility of a task set under preemption thresholds using the keep-in-core preemption method.
In this section, we propose a task-level memory requirement analysis for the sporadic task model considered in this paper.
When analyzing the maximum memory requirements of a task $\tau_i$, only tasks on the same core as $\tau_i$ must be considered, as tasks on other cores cannot preempt $\tau_i$.
The goal is to find the preemption chain that results in the largest cumulative memory requirement.
A so-called \textit{preemption chain} is a sequence/chain of tasks where each task is preempted by the next task in the chain, i.e., nested preemptions. Meaning that, each pair of consecutive tasks in the sequence has a preemption relationship, i.e. $\theta_k < P_{j}$.
The number of tasks in a preemption chain can depend on the priority relationship between tasks on the same core \cite{memAnalysisRTNS24}. We call the first task in the chain its \emph{base-task}. 

A preemption chain that starts with task $\tau_i\in\Gamma_{\pi}$ as a base-task is defined as the ordered set $PC_i=\{\tau_i, \dots, \tau_j\}$, where $\Gamma_{\pi}$ denotes the set of tasks mapped to core $\pi$.
The task at the $j^{th}$ position in $PC_i$ is obtained by $PC_i(j)$. 
The problem of determining the maximum memory requirement of all preemption chains with task $\tau_i$ as the base-task can then be formulated as an optimization problem.
The goal is to find the assignment of tasks $\in\Gamma_{\pi}$ to $PC_i$ such that:\\
$\forall j \in [1, |PC_i|-1]$:
\begin{equation}
    \label{eq:constrMem}
    \theta_{PC_i(j)} > P_{PC_i(j+1)}
\end{equation}
With the objective to maximize the memory requirement of all tasks in $PC_i$:
\begin{equation}
    \operatorname{Maximize:} \sum_{\forall \tau_j \in PC_i} M_j
\end{equation}


The solution to the optimization problem results in the preemption chain $PC_i$ for task $\tau_i$ with the maximum memory requirement. 
This analysis is similar to the task stack address calculation algorithms proposed for stack sharing in RTOS \cite{altmeyer2018empress}.

A task set is memory-feasible if the maximum preemption chain of each task on each core is not larger than $S$:
\begin{equation}
    \forall \pi_k \in \Pi, \forall \tau_i \in \Gamma_{\pi_k}: \sum_{\forall \tau_j \in PC_i} M_j \le S
\end{equation}

\section{Selecting Preemption Thresholds}
\label{sec:selectingPreemptionThresholds}


Several related works propose using preemption thresholds for better stack utilization in stack sharing systems and better schedulability \cite{wang1999scheduling, wang_partitioned_multiCore, saksena2000scalable, ghattas2007preemption}. They also propose algorithms to find and assign maximum possible preemption thresholds, aiming for improved stack utilization. 
Ghattas et al. \cite{ghattas2007preemption} propose the Maximal Preemption Threshold Assignment Algorithm (MPTAA). 
MPTAA is based on the maximum preemption threshold algorithm proposed by Wang and Saksena \cite{wang1999scheduling, saksena2000scalable} and has been proven to find the largest preemption threshold assignment than any other feasible preemption threshold assignment \cite{chen2005solution}.
The larger the preemption threshold assignment is, the lower the memory requirement is.
In this paper, we use the MPTAA algorithm to assign maximum preemption thresholds to our 3-phase tasks.

\subsubsection{Background on MPTAA}
\label{Sec:MPTAA}

The MPTAA algorithm is proposed for conventional task sets on a single-core system. 
The initial configuration for MPTAA is fully-preemptive, i.e., $\mathit{\forall \tau_i \in [i, n], \theta_i = P_i}$. 
MPTAA aims to find the maximum PT assignment for task sets that are already schedulable with the initial fully preemptive configuration, aiming to improve stack utilization while maintaining the schedulability it started with, similar to \cite{saksena2000scalable}. 
Unique priorities are used. 
Starting from the highest priority task in the task set, MPTAA increases the PT of each task with increments of one at a time and checks if the schedulability is maintained at each increment. In this setup with unique priorities, when the PT is incremented for a task $\tau_i$ by one, \emph{only} the task $\tau_j$ with a nominal priority equal to the new incremented PT, i.e., $\mathit{P_j = \theta_i + 1}$, is affected.
Using this fact, MPTAA only checks if the affected task $\tau_j$ still meets the deadline with the PT change. 
In the case that the affected task misses the deadline, the incremented PT is restored back to the original value. This continues until the system finds the maximum possible PT assignment. For more details, see \cite{ghattas2007preemption}.

\subsubsection{Adapting MPTAA to 3-phase tasks on mulit-core}
\label{Sec:adaptingMPTAA}


When MPTAA is applied to 3-phase tasks on multi-core platforms, we must re-evaluate what tasks are affected by a PT increment of a task $\tau_i$. 
While our WCRT and memory analyses apply to task sets with equal and unique priorities, we assume unique priorities to adapt to the MPTAA algorithm for assigning preemption thresholds.

\begin{lemma}
    A preemption threshold increment of task $\tau_i$ by one only affects a task $\tau_j$ in category \emph{\circled{E}}, assigned to the local core, that has a nominal priority that is equal to the new preemption threshold of $\tau_i$, i.e., $\theta_i + 1 = P_j$. 
\end{lemma}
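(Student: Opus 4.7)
The plan is to show that when $\theta_i$ increases by one, the only local task whose WCRT can change is the unique $\tau_j$ with $P_j = \theta_i + 1$, and that this $\tau_j$ sits in category $\circled{E}$ relative to $\tau_i$'s pre-increment threshold. I would proceed in three steps: reduce to the local core, rule out a change in $\tau_i$'s own WCRT, and perform a boundary case analysis for every other local task.

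The reduction to the local core is immediate: Lemmas~\ref{lem:interI} and~\ref{lem:mu} bound the inter-core contributions using nominal priorities alone, so remote tasks retain the same WCRT. For $\tau_i$'s own WCRT, Equation~\ref{Eq:calc_busyPeriod} combines $I_i$, $B_i$, $I_i^{mem}$, and $B_i^{mem}$, each written in terms of groupings coarser than the six individual categories. $I_i$ and $\Phi_i$ sum over $hep_{i,l} = \circled{C} \cup \circled{D} \cup \circled{E}$, and $B_i$ uses $\circled{A}$ together with $\circled{B} \cup \circled{F}$. Since membership in $hep_{i,l}$ and in $\circled{A}$ is determined by comparisons of $P_m$ with $P_i$ alone, the complementary set $\circled{B} \cup \circled{F}$ within the local lower-priority tasks is also invariant under $\theta_i \mapsto \theta_i + 1$. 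Hence no term in Equation~\ref{Eq:calc_busyPeriod} can move.

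The heart of the argument is a case analysis on each other local task $\tau_k$, tracking how $\tau_i$'s category \emph{from $\tau_k$'s perspective} changes under $\theta_i \mapsto \theta_i + 1$. The pivotal observation is that the order of $\theta_i$ with any value $v \in \{P_k, \theta_k\}$ flips only when $v = \theta_i + 1$. Enumerating these flips yields three transition types: (i) $P_k = \theta_i + 1$, forcing $P_k > P_i$, where $\tau_i$ moves from $\circled{A}$ to $\circled{B}$ or $\circled{F}$ and its contribution to Equation~\ref{Eq:IntraB} switches from $\max(C_k^r, C_k^w)$ to $C_i$; (ii) $\theta_k = \theta_i$ with $P_k < P_i$, where $\tau_i$ moves from $\circled{C}$ to $\circled{D}$, yet both sit inside $hep_{k,l}$ and neither can preempt $\tau_k$'s E-phase because E-membership depends on $\theta_k$ versus $P_i$, which are unchanged; (iii) $\theta_k = \theta_i + 1$ with $P_k > P_i$ and $P_k \leq \theta_i$, where $\tau_i$ moves from $\circled{B}$ to $\circled{F}$, but both contribute $C_i$ to $B_k$ identically. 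Only case (i) yields a genuine change, and the unique-priority assumption of Section~\ref{Sec:adaptingMPTAA} delivers exactly one such $\tau_k$, which is the lemma's $\tau_j$. A direct check gives $P_j = \theta_i + 1 > \theta_i$ and $P_j \leq \theta_j$, placing $\tau_j$ in $\circled{E}$ relative to the pre-increment $\tau_i$.

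The main obstacle is the exhaustiveness of this case analysis: six categories feed into several distinct roles (intra-core interference, intra-core blocking with an A-versus-B/F coefficient split, and the $\circled{E}$-only preemption test for end-time calculation in Equation~\ref{Eq:cal_end_write_phase}). One must verify that the ostensibly worrying C-to-D and B-to-F transitions are absorbed by the coarser groupings used in every WCRT equation, so that the only transition whose coefficients actually change is the A-to-B/F movement triggered by $P_k = \theta_i + 1$.
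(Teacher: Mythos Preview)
Your proposal is correct and follows essentially the same category-transition case analysis as the paper; the paper indexes the cases by where $\tau_j$ lands in $\tau_i$'s categories and then translates to $\tau_j$'s viewpoint, whereas you parameterize directly by where $\tau_i$ lands in each $\tau_k$'s categories, but the three non-trivial transitions ($\circled{A}\!\to\!\circled{B}/\circled{F}$, $\circled{B}\!\to\!\circled{F}$, $\circled{C}\!\to\!\circled{D}$) and their effects coincide.

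Two small points to tighten. First, your claim that $\tau_i$'s own WCRT is \emph{unchanged} because ``no term in Equation~\ref{Eq:calc_busyPeriod} can move'' is slightly too strong: Equation~\ref{Eq:cal_end_write_phase} isolates category~$\circled{E}$, and $\circled{E}$ strictly shrinks when $\theta_i$ grows, so $\tau_i$'s WCRT may \emph{decrease}. The paper states only that it ``can't get worse,'' which is all MPTAA needs; your step should claim non-increase rather than invariance. Second, in case~(i) the blocking contribution that changes is $\max(C_i^r, C_i^w)$, i.e., the memory-phase lengths of the blocking task $\tau_i$, not $\max(C_k^r, C_k^w)$.
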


\begin{proof}
    The preemption thresholds only affect tasks assigned to the local core and do not affect tasks assigned to remote cores. Thus, an increment in PTs can only affect tasks assigned to the local core. This proof shows how local tasks in each category with respect to a task $\tau_i$ are affected by a preemption threshold increment of $\tau_i$ (see Figure \ref{fig:categories}).

1) In the scenario of $\tau_i$'s PT is incremented by 1 and the new PT is equal to the nominal priority of a task $\tau_j$ that was in group \circled{E}, i.e., $\theta_i + 1 = P_j$, task $\tau_j$ will be moved from group $\circled{E}_i$ to $\circled{D}_i$ or to $\circled{C}_i$.
In this setup with unique priorities, there will only be one such affected task, $\tau_j$ in category $\circled{E}_i$, as no two tasks have the same nominal priority, similar to the original MPTAA. 
This means from the perspective of $\tau_j$, $\tau_i$ will be moved from group $\circled{A}_j$ to $\circled{B}_j$ or to $\circled{F}_j$.
Hence, the possible blocking from $\tau_i$ on $\tau_j$ will increase. 
And $\tau_j$, which previously could preempt $\tau_i$ during its execution phase, now can not. 
Thus, $\tau_j$'s WCRT is affected in this scenario and should be re-analyzed according to the new circumstances. 
However, $\tau_i$ will get less interference from $\tau_j$ with this change, meaning that the WCRT of $\tau_i$ can't get worse. 
2) A task $\tau_j$ in groups $\circled{A}_i$, $\circled{B}_i$ or $\circled{C}_i$, could not preempt $\tau_i$ even before the increment as $\theta_i > P_j$. 
They will remain in the same task groups after an increment. 
Thus, nothing changes for a task $\tau_j$ as well as task $\tau_i$ with the increment. 
3) The increment can result in a PT that is equal to the PT of a task $\tau_j$ in group $\circled{D}_i$, i.e., $\theta_i + 1 = \theta_j$. 
Such a scenario can move a task $\tau_j$ from group $\circled{D}_i$ to group $\circled{C}_i$. 
Similarly, the task $\tau_i$ can be moved from group $\circled{B}_j$ to $\circled{F}_j$ from the perspective of $\tau_j$.
4) Moreover, an increment of the PT can move a task $\tau_j$ that previously had a PT equal to task $\tau_i$'s PT from group $\circled{F}_i$ to group $\circled{B}_i$ as it results in a situation where $\theta_i > \theta_j$.
Such a scenario will respectively move $\tau_i$ to group $\circled{D}_j$ from $\circled{C}_j$ from the perspective of $\tau_j$.
However, these changes in 3) and 4) will not affect the analysis or WCRT of either task $\tau_j$ or $\tau_i$ as the blocking from a task in group $\circled{B}$ or $\circled{F}$ on a task is the same as well as interference from groups $\circled{C}$ and $\circled{D}$ are the same. 

In the setup with unique nominal priorities, when the PT is incremented by 1, similar to the original MPTAA algorithm proposed for single-core systems, only one task in group \circled{E} is affected, and it is enough to check if this only affected task in group \circled{E} is still schedulable.
$\square$
\end{proof}

\section{Evaluation}
\label{sec:Evaluation}

We evaluate schedulability and memory-feasibility of 3-phased tasks under non-preemptive (NP), fully-preemptive execution phases (FP), and limited preemption using preemption threshold (PT) scheduling. 
In our plots, \emph{schedulable} refers to the percentage of task sets that meet all their deadlines, while \emph{sched\emph{+}mem} refers to the percentage of task sets that meet both deadlines and local memory constraints. 
Preemption threshold assignment can be coupled with priority assignment and mapping algorithms \cite{memAnalysisRTNS24, wang_partitioned_multiCore}. 
However, in this paper, our focus is on evaluating how preemption thresholds improve memory feasibility. 
For all experiments, rate-monotonic priorities and worst-fit mapping are used, and we assume non-negative integer timing parameters.

\begin{figure*}

	\centering
 	\begin{subfigure}[b]{0.35\textwidth}
		\centering
		\includegraphics[width=\textwidth]{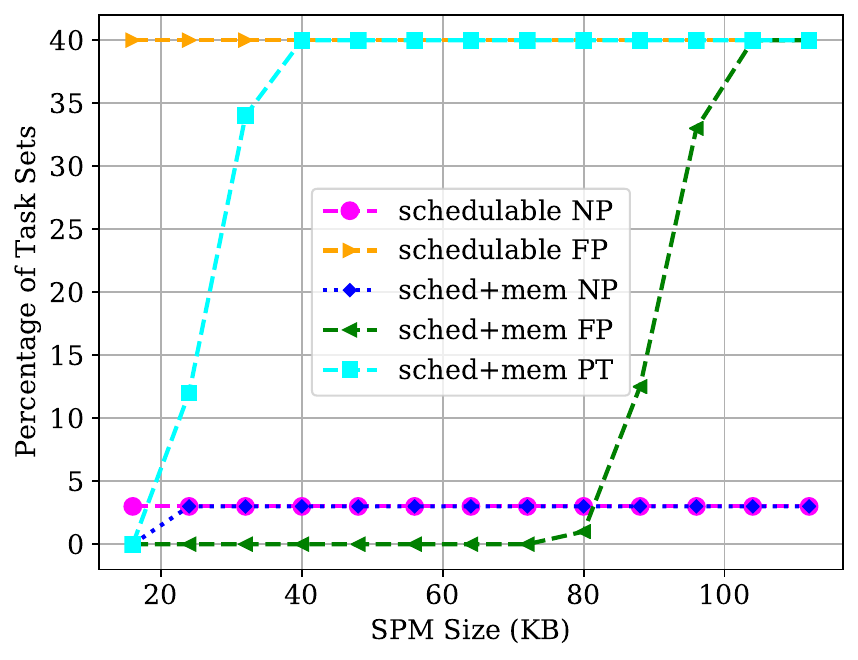}
		\caption{local memory size}
		\label{fig:spm}
	\end{subfigure} 
        \begin{subfigure}[b]{0.35\textwidth}
		\centering
		\includegraphics[width=\textwidth]{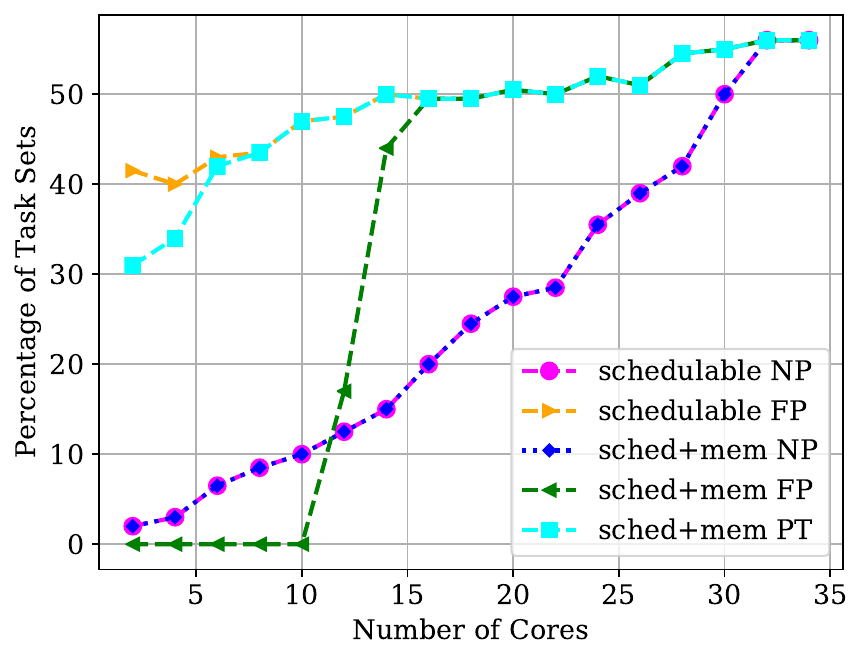}
		\caption{core count}
		\label{fig:cores}
	\end{subfigure}
        \begin{subfigure}{0.35\textwidth}
		\centering
		\includegraphics[width=\textwidth]{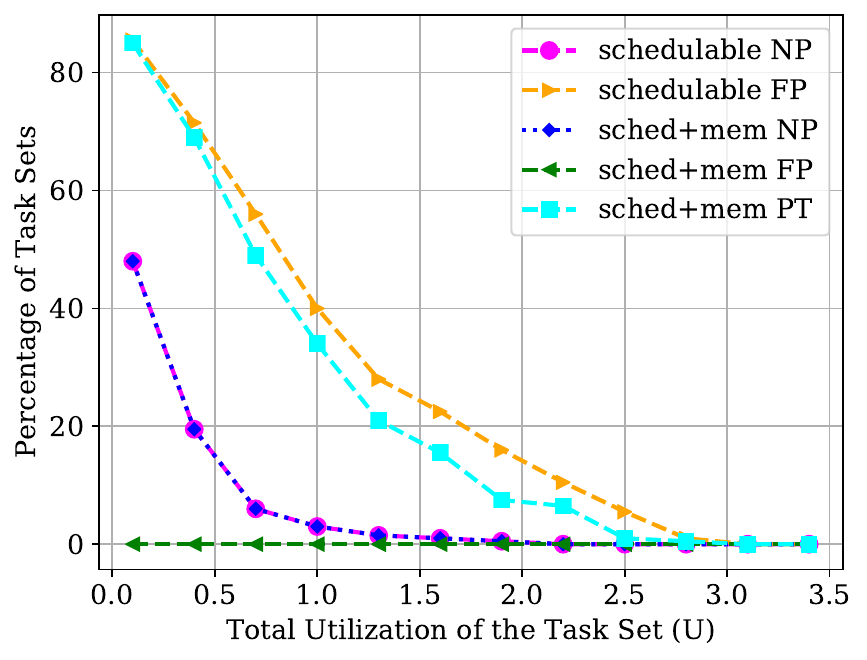}
		\caption{utilization}
	   \label{fig:util}
	\end{subfigure}  
        \begin{subfigure}{0.35\textwidth}
		\centering
		\includegraphics[width=\textwidth]{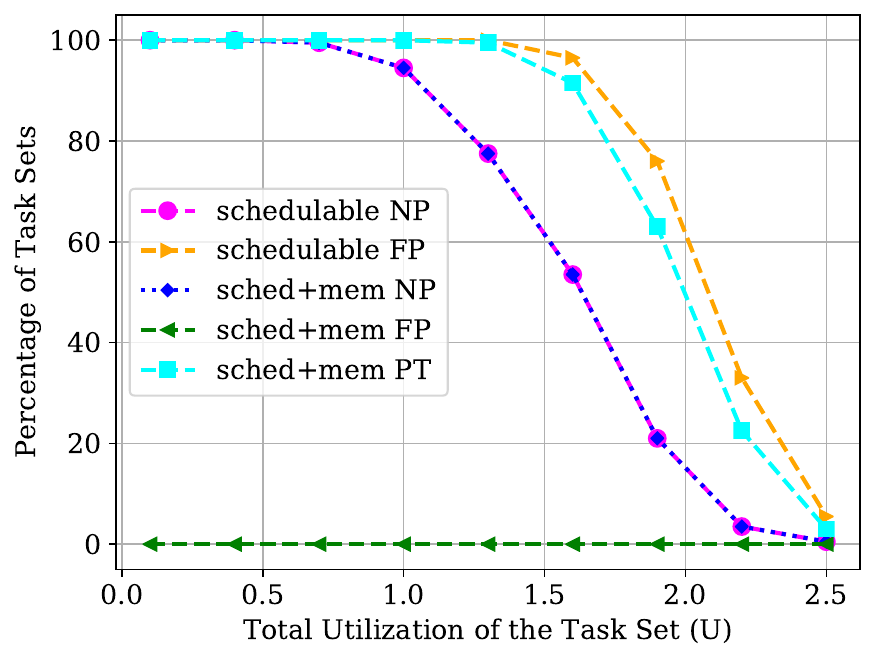}
            \caption{utilization log-uniform}
		\label{fig:util_logU}
	\end{subfigure}
        \caption{Percentage of schedulable and memory-feasible task sets while different parameters are varied, for automotive (a, b and c) and log-uniform (d) minimum inter-arrival times.}       
        \label{fig:feasibility_eval}
\end{figure*}

\subsection{Task Set Generation}
\label{Sec:taskSetGeneration}

The automotive period values reported by Kramer et al.~\cite{kramer2015real} $\{1, 2, 5, 10, 20, 50, 100, \\ 200, 1000\}$ms, with their associated probability percentages, $\{3, 2, 2, 25, 25, 3, 20, 1, \\ 4\}$, respectively, are used to randomly generate minimum inter arrival times. 
For a given total utilization $U$, the Dirichlet-Rescale (DRS) algorithm~\cite{DRS} is used to randomly generate $u_i$ utilization values for each task in the task set. 
Using the relationship, $C_i = u_i \cdot T_i$, the total execution time of the task is determined. 
The total data size of a task, $\mathit{Dm_i}$, is generated based on the distribution of the label sizes for an example engine control application provided by Kramer et al. in automotive applications~\cite{kramer2015real}, for a given number of labels ($l_i$) for the task. 
$l_i$ values for tasks are generated using a uniform distribution in the range [2, 100] labels per task.
Read data, $Rm_i$, and write data, $\mathit{Wm_i}$ sizes are calculated from $Dm_i$ using the percentages provided for read-only, write-only, and read-write label data in~\cite{kramer2015real}. 
Thus we consider, $Rm_i = 0.9Dm_i$ and $\mathit{Wm_i} = 0.6Dm_i$.
The code size of a task, $Im_i$, is generated using a uniform distribution in the range [2KB, 15KB]. 
The maximum stack usage of a task, $Ms_i$, is selected from a uniform distribution between [1KB, 4KB].

We consider that the read and write phase times are proportional to their corresponding data sizes. 
The ratio between the total time for memory phases and the execution phase, $\gamma$, is chosen randomly from the range [0.05, 0.15].
The total amount of data read during the read phase is considered as $Rm_i + Im_i$. The total amount of data written during the write phase is considered as $\mathit{Wm_i}$.
Therefore, the ratio between the read phase time, $C_i^r$, and the write phase time, $C_i^w$, can be expressed as $\alpha = (Rm_i + Im_i)/ \mathit{Wm_i}$. 
Also, we know the relationship $\gamma  \cdot C_i = C_i^r + C_i^w$. 
Using these two relationships, we get the write phase time, $C_i^w = C_i \cdot \gamma/(\alpha +1)$. Accordingly, the read phase time is calculated using $C_i^r = C_i \cdot \gamma - C_i^w$.
The time for the execution phase is calculated using the relationship, $C_i^e = (1- \gamma)\cdot C_i$
We avoid the generation of inevitably unschedulable task sets, i.e., a lower priority task having a memory phase that is longer than the minimum inter-arrival time of a higher priority task, which makes the higher priority task unschedulable by nature due to mandatory inter/ intra-core blocking. 
In our experiments, a default configuration of $U=1, m=4, S=32\mathit{KB}$, and $n=32$ is used, and when one of these parameters is varied in an experiment, the rest of the parameters are kept constant at the default value.
Unique nominal priorities are assigned to tasks for the evaluation as the MPTAA requires it (assigned preemption thresholds could be equal).

\subsection{Schedulability and Memory Feasibility Evaluation}
\label{Sec:feasibilityEvaluation}

\subsubsection{Varying the SPM Size}
\label{Sec:varyingSPM}

The local memory size varies in the range of [16, 112] KB with increments of 8 (see Figure \ref{fig:spm}).
The schedulability is constant throughout the range as it is not affected by the SPM size.
Fully preemptive (FP) execution phases show 37\% more schedulable task sets compared to the NP version.
The schedulability of the limited preemptive scheduling using PTs is the same as \emph{schedulable-FP} and is not plotted, as the MPTAA algorithm starts with a task set that is schedulable with FP scheduling and assigns PTs without affecting schedulability.
The most interesting result from this experiment is the percentage of task sets that are both schedulable and memory-feasible. While all schedulable task sets with NP scheduling are memory-feasible after 24KB, the schedulability is low. 
In contrast, the schedulability is maximized with FP scheduling. 
However, the memory feasibility is very low, i.e., zero until 72KB.
These results showcase the trade-off between FP and NP scheduling and the motivation for limited preemptive approaches.
The \emph{sched\emph{+}mem PT} curve clearly demonstrates how limiting preemption improves memory feasibility while maintaining the same schedulability as FP scheduling.
PT scheduling achieves up to 40\% more task sets than FP, that are both schedulable and memory-feasible.
This shows the benefits of using PTs, especially when the local memory size is less than 104KB. 
While the FP version requires 104KB to achieve 100\% memory feasibility for all schedulable task sets, by utilizing PT, this can be achieved only with 40KB ($2.5\times$ improvement). Therefore, the proposed methods enable using a platform with a much smaller local memory for the same applications.

\subsubsection{Varying the Utilization and Core Count}
\label{Sec:varying_Uandm}

The number of cores available on the platform is varied in the range [2, 34] with increments of 2 (see Figure \ref{fig:cores}). 
The schedulability increases with the increase in the number of cores, as expected.
Figure \ref{fig:cores} shows the importance of using PT scheduling to limit the number of preemptions when using a platform with only a few, e.g., fewer than 15, cores to achieve both schedulability and memory feasibility. While FP version needs 16 cores to achieve 100\% memory feasibility for all schedulable task sets, by utilizing PTs it is achieved only with 8 cores. Thus, the proposed methods allow the use of a platform with half the number of cores for the same applications.

The total utilization of the task set varies in the range [0.1, 3.4] with increments of 0.3 (Figure \ref{fig:util}).
In addition to minimum inter-arrival times based on automotive periods, we also explore log-uniform minimum inter-arrival times in the range of [100, 1000]ms to evaluate the benefits of the preemption thresholds (\ref{fig:util_logU}).
Though at different rates, the schedulability decreases for both task set types with the increase of utilization.
The most interesting observation is that preemption thresholds improve the memory-feasibility for both task set types.

\section{Conclusions and Future Work}

This paper explores the use of preemption thresholds to limit the number of preemptions for 3-phase tasks under partitioned fixed priority scheduling to increase memory feasibility while maintaining schedulability. 
A worst-case response time analysis to evaluate timing behavior and a worst-case memory requirement analysis to analyze memory feasibility are presented. 
Evaluation results show that using preemption thresholds can improve memory feasibility by $2.5\times$ over fully preemptive scheduling while still maintaining the schedulability of fully preemptive scheduling, which is $13\times$ more than non-preemptive scheduling. 
While this paper is focused on utilizing preemption thresholds to improve memory feasibility, future work will explore preemption thresholds to improve schedulability.

%
%
%
\bibliographystyle{splncs04}
\bibliography{ref}
%





\end{document}